\newtheorem{proposition}{Proposition}
\newtheorem{theorem}{Theorem}
\newtheorem{lemma}{Lemma}
\newtheorem{corollary}{Corollary}
\theoremstyle{definition}
\newtheorem{example}{Remark}
\newtheorem{definition}{Definition}
\newcommand{\R}{\mathbb R} 
\newcommand{\C}{\mathbb C} 
\renewcommand{\S}{\mathbb S} 
\newcommand{\RP}{\R{\bf P}} 
\newcommand{\half}{\tfrac{1}{2}} 
\newcommand{\hi}{\mathcal{H}} 
\newcommand{\lh}{\mathcal{L(H)}} 
\newcommand{\sh}{\mathcal{S(H)}} 
\newcommand{\no}[1]{\left\|#1\right\|} 
\newcommand{\tr}[1]{\mathrm{tr}\left[#1\right]} 
\newcommand{\rank}[1]{\mathrm{rank}(#1)} 
\newcommand{\id}{\mathbbm{1}} 
\newcommand{\nul}{O} 
\newcommand{\prem}{\mathcal{P}} 
\newcommand{\Min}[1]{\mathfrak{m}\left[#1\right]} 
\newcommand{\Ao}{\mathsf{A}}
\newcommand{\Bo}{\mathsf{B}}
\newcommand{\os}{\mathfrak{S}} 
\newcommand{\oscomp}{\os^\perp} 
\newcommand{\osgen}[1]{\mathfrak{S}(#1)} 
\newcommand{\As}{\mathfrak{S}(\Ao)} 
\newcommand{\ABs}{\mathfrak{S}(\Ao,\Bo,\ldots)} 
\newcommand{\Ascomp}{\As^\perp} 
\newcommand{\Bs}{\mathcal{B}} 
\newcommand{\Bscomp}{\mathcal{B}^\perp} 
\newcommand{\CP}{{\mathbb{C}}{\bf P}}
\newcommand{\ra}{\rightarrow}
\newcommand{\1}{\mathbbm{1}}
\def\>{{\rangle}}
\def\<{{\langle}}
\newcommand{\be}{\begin{equation}}
\newcommand{\ee}{\end{equation}}
\newcommand{\bea}{\begin{eqnarray}}
\newcommand{\eea}{\end{eqnarray}}
\begin{document}

\title[Quantum Tomography under Prior Information]{Quantum Tomography under Prior Information}

\author[Heinosaari]{Teiko Heinosaari$^\sharp$}
\address{$\sharp$ Turku Centre for Quantum Physics, Department of Physics and Astronomy, University of Turku, Finland. Email: teiko.heinosaari@utu.fi.}

\author[Mazzarella]{Luca Mazzarella$^\downharpoonright$}
\address{$\downharpoonright$ Department of Information Engineering, University of Padova,
Via Gradenigo 6/B, 35131 Padova, Italy. Email: mazzarella@dei.unipd.it.}

\author[Wolf]{Michael M. Wolf$^\ddag$}
\address{$\ddag$ Department of Mathematics, Technische Universit\"at M\"unchen, 85748 Garching, Germany. Email: wolf@ma.tum.de.}

\begin{abstract}
We provide a detailed analysis of the question: how many measurement settings or outcomes are needed in order to identify an unknown quantum state which is constrained by prior information? 
We show that if the prior information restricts the possible states to a set of lower dimensionality, then topological obstructions can increase the required number of outcomes by a factor of two over the number of real parameters needed to characterize the set of all states. 
Conversely, we show that almost every measurement becomes informationally complete with respect to the constrained set if the number of outcomes exceeds twice the Minkowski dimension of the set.
We apply the obtained results to determine the minimal number of outcomes of measurements which are informationally complete with respect to states with rank constraints.
In particular, we show that the minimal number of measurement outcomes (POVM elements) necessary to identify all pure states in a $d$-dimensional Hilbert space is $4d-3-c(d) \alpha(d)$ for some $c(d)\in[1,2]$ and $\alpha(d)$ being the number of ones appearing in the binary expansion of $(d-1)$.
\end{abstract}

\maketitle

\section{Introduction}\label{sec:intro}

Quantum tomography aims at identifying states of a quantum system. 
In order to achieve this aim, measurement data is often supplemented by prior information. 
In this work we consider cases where prior information effectively reduces the dimensionality, i.e., the number of parameters which are necessary to characterize the state of a system. Physically, one may think of scenarios of interferometry, process tomography or parameter estimation, where one prepares the initial state which then evolves depending on a certain number of unknown parameters before one measures the final system. Effective reductions of the number of parameters can also be due to a constraining symmetry or fixed energy or particle number.

We are interested in determining the minimal number of measurement outcomes or measurement settings which are required for identifying a state taken from such a reduced set. Clearly, if states in the considered set are parameterized by a certain number of independent real parameters, then we need at least this number of measurement outcomes or binary measurements in order to pinpoint the state. As an example take the manifold of pure states in a $d$-dimensional Hilbert space. Their description requires $2d-2$ real parameters, as opposed to $d^2-1$ real parameters needed to describe an arbitrary density matrix. So if we want to determine a pure state by a single measurement with $m$ outcomes, $m\sim 2d$ seems necessary and $m\sim d^2$ appears to be achievable.  The question about the smallest possible $m$ in this particular example has been addressed in a number of publications \cite{Weigert92,AmWe99a,AmWe99b,Finkelstein04,FlSiCa05}, but the answer has remained somewhat elusive, so far.

A related problem has been addressed based on \emph{compressed sensing} ideas, where it has been shown \cite{GrLiFlBeEi10} that for $d\times d$ density matrices of rank $r$, $m=O(dr\log(d)^2)$ binary measurements are sufficient in order to identify the state with high probability.
In this light we emphasize that our focus lies on schemes which identify the unknown state unambiguously and deterministically. In this work we disregard questions regarding the robustness of the schemes, the complexity of the necessary post-processing of the measurement data or the ability to verify the assumed prior information.

\emph{Outline}. Typically, when we want to know the minimal number of measurement outcomes related to some subset of states, we need to independently consider upper and lower bounds. 
In Section \ref{sec:upper} we concentrate on upper bounds, while in Section \ref{sec:lower} we provide some methods to obtain lower bounds. 
The upper bounds are of geometric or algebraic nature and often come with concrete constructions. The lower bounds are mainly topological in nature. They are based on the observation that any measurement which is informationally complete when supplemented by prior information is a mapping into the space of measurement outcomes which preserves topological invariants.  In the case of pure states the upper and lower bounds essentially match which enables us for instance to show that for a minimal informational complete measurement   $m\sim 4d$ up to an additive logarithmic correction.
This exemplifies our general finding that, loosely speaking, on the one hand topological obstruction can force $m$  to be twice the dimension of a considered manifold, whereas on the other hand geometric reasoning allows us to show that  such an overhead is essentially always sufficient.

\emph{Notation}. In this paper $\hi$ is a fixed \emph{finite} dimensional complex Hilbert space.
We denote by $\lh$ the set of linear operators on $\hi$ and we will write $\no{L}$ for the operator norm which coincides with the largest singular value of $L\in\lh$.
A positive operator $\varrho\in\lh$ having trace one is a \emph{density operator}, also referred to as \emph{state}, and we denote by $\sh$ the set of all states on $\hi$.

\section{Quantum tomography prerequisites}\label{sec:os}

In this section we will introduce and analyze the linear algebra framework for quantum tomography. Later, in Section \ref{sec:lower}, we will then add a topological perspective.

The goal of quantum tomography is to identify an unknown quantum state from the statistics of measurement outcomes. We will consider tomographic schemes where a fixed (as opposed to an adaptive) measurement setting is chosen, and we will refrain from considering errors for instance caused by finite statistics.  
In particular, in the following \textquotedblleft~measurement~\textquotedblright always refers to a full statistical experiment rather than to a single-shot experiment. 

Our analysis deals with several different albeit closely related approaches:  1) single measurements with many outcomes; 2) several measurements with possibly fewer outcomes; 3) several measurements where only expectation values are considered.

Quantum measurements are generally described by \emph{positive operator valued measures} (POVMs) \cite{PSAQT82,OQP97}. Since we are eventually interested in the minimal required number of outcomes, we will restrict ourselves to POVMs with finitely many outcomes. For $n\in\mathbb{N}$ outcomes such a POVM is characterized by a collection of positive operators $\{\Ao_j\}_{j=1}^n:=\Ao\subset\lh$ satisfying $\sum_{j=1}^n \Ao_j = \id$. 
Notice that a POVM with $n$ outcomes is already determined by $n-1$ operators; the last operator $\Ao_n$ is given by $\Ao_n=\id-\sum_{j=1}^{n-1} \Ao_j$.

For a POVM $\Ao$ and a state (density matrix) $\varrho$, we denote by $\varrho^{\Ao}$ the corresponding probability distribution of measurement outcomes. It is given by the formula
\begin{equation*}
\varrho^\Ao(j) = \tr{\varrho\Ao_j} \, , \quad j=1,\ldots,n \, .                                                                                                                            
\end{equation*}
Again, due to normalization the last component is determined by the others via $\varrho^\Ao(n)=1-\sum_{j=1}^{n-1}\varrho^\Ao(j)$. 
A POVM with $n$ outcomes therefore induces a continuous, convex-linear map from $\sh$ into $\mathbb{R}^{n}$ (or $\mathbb{R}^{n-1}$ if we disregard the somewhat irrelevant last component). 
In the most favorable case $\varrho_1^\Ao \neq \varrho_2^\Ao$ for all pairs of different states $\varrho_1$ and $\varrho_2$; in this case $\Ao$ is called \emph{informationally complete} \cite{Prugovecki77}, \cite{Busch91} and the respective map $\varrho\mapsto\varrho^\Ao$ is injective on $\sh$.
In this work we are interested in cases where prior information, or premise, limits the possible initial states of the system.
In other words, we want to identify $\varrho$ not necessarily among all states but within some subset $\prem\subseteq\sh$ of states. 
This motivates the following definition.

\begin{definition}[$\prem$-informational completeness]
Let $\prem\subseteq\sh$ be a subset of density matrices.
A POVM $\Ao$ is called \emph{informationally complete w.r.t. $\prem$} if $\varrho\mapsto\varrho^\Ao$ is injective on $\prem$, i.e., 
\begin{equation*}
 \forall \varrho_1,\varrho_2\in\prem:\quad \varrho_1^\Ao = \varrho_2^\Ao \Rightarrow\varrho_1=\varrho_2.
\end{equation*}
\end{definition}

In a  more general scheme one may perform several measurements and use all of their measurement outcome statistics.
Let $\{\Ao,\Bo,\ldots\}$ be a finite collection of POVMs. 
We denote by $\ABs$ the linear span of the set 
$$
\{\Ao_1,\ldots,\Ao_n\} \cup \{\Bo_1,\ldots,\Bo_m\} \cup \cdots \, ,
$$
where $n$ is the number of measurement outcomes of $\Ao$, $m$ is the number of measurement outcomes of $\Bo$, and so forth. 
In particular, 
\begin{equation*}
\As := \Bigl\{ \sum_j c_j \Ao_j \mid c_j \in \C \Bigr\} \, .
\end{equation*}
The set $\ABs$ is a linear subspace of $\lh$, the vector space of all linear operators on $\hi$.
Clearly, $\id\in\ABs$ and $\ABs^\dagger=\ABs$, where $\ABs^\dagger=\{A^\dagger:A\in\ABs\}$.
A linear subspace of $\lh$ with these two properties is called an \emph{operator system}.

If the system is described by a state $\varrho$ and the measurement statistics of the collection $\{\Ao,\Bo,\ldots\}$ is given, then we can calculate the expectation $\tr{\varrho X}$ of any operator $X\in\ABs$. 
Therefore, from the point of view of informational completeness, only the linear span $\ABs$ is relevant.
Two collections $\{\Ao,\Bo,\ldots\}$ and $\{\Ao',\Bo',\ldots\}$ are considered equivalent from this perspective if their linear spans are the same.
The following simple result shows that any operator system is generated by a single POVM.
Therefore,  we can restrict our investigation to the situation where only a single POVM is measured. 
   
\begin{proposition}[POVMs and operator systems]\label{prop:operator-system}
Let $\os\subseteq\lh$ be an operator system (i.e. $\os$ is a linear subspace such that $\id\in\os$ and $\os^\dagger=\os$).
There exists a POVM $\Ao$ such that $\os=\As$ and $\Ao$ has $\dim\os$ outcomes.
Any POVM $\Bo$ satisfying $\os=\osgen{\Bo}$ has at least $\dim\os$ outcomes.
\end{proposition}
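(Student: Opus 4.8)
The plan is to prove the two assertions separately, since the lower bound is essentially immediate while the existence claim needs an explicit construction. For the lower bound, observe that if $\Bo=\{\Bo_1,\ldots,\Bo_k\}$ is any POVM with $\osgen{\Bo}=\os$, then by definition $\os$ is the linear span of the $k$ operators $\Bo_1,\ldots,\Bo_k$, so $\dim\os\leq k$. Hence every generating POVM has at least $\dim\os$ outcomes, which is the second statement.

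For the existence statement I set $n=\dim\os$ and first reduce to linear algebra by producing a convenient basis. Since $\os^\dagger=\os$, every $X\in\os$ decomposes as $X=\tfrac12(X+X^\dagger)+i\,\tfrac{1}{2i}(X-X^\dagger)$ into self-adjoint parts that again lie in $\os$, so the real space $\os_{\mathrm{sa}}$ of self-adjoint elements of $\os$ satisfies $\dim_\R\os_{\mathrm{sa}}=\dim_\C\os=n$. Because $\id\in\os_{\mathrm{sa}}$, I would extend $\{\id\}$ to a self-adjoint basis $\{H_1=\id,H_2,\ldots,H_n\}$ of $\os$.

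Next I would build the POVM by perturbing $\id$ along this basis. For small parameters $\alpha,\epsilon>0$ define
\begin{equation*}
\Ao_j=\alpha\,\id+\epsilon\,H_{j+1}\quad(j=1,\ldots,n-1),\qquad \Ao_n=\id-\sum_{j=1}^{n-1}\Ao_j.
\end{equation*}
By construction $\sum_{j=1}^n\Ao_j=\id$, every $\Ao_j$ lies in $\os$ (using $\id\in\os$), and the span of $\{\Ao_1,\ldots,\Ao_n\}$ equals the span of $\{\id,\Ao_1,\ldots,\Ao_{n-1}\}=\mathrm{span}\{\id,H_2,\ldots,H_n\}=\os$; as there are exactly $n$ of them, they form a basis of $\os$. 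It then remains to choose $\alpha,\epsilon$ so that all $\Ao_j$ are positive. For $j<n$ one has $\Ao_j\geq(\alpha-\epsilon\no{H_{j+1}})\id$, while $\Ao_n=(1-(n-1)\alpha)\id-\epsilon\sum_{j=2}^n H_j\geq\bigl(1-(n-1)\alpha-\epsilon\no{\sum_{j=2}^nH_j}\bigr)\id$; thus fixing any $\alpha\in(0,\tfrac{1}{n-1})$ and then taking $\epsilon$ sufficiently small makes each lower bound nonnegative, so $\Ao=\{\Ao_1,\ldots,\Ao_n\}$ is a POVM with $n=\dim\os$ outcomes and $\As=\os$. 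The degenerate case $n=1$ forces $\os=\C\id$ and is handled by the trivial one-outcome POVM $\{\id\}$.

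The only genuine subtlety — and the step I expect to be the crux — is reconciling the positivity requirement with the demand that the operators still span the full $n$-dimensional space $\os$: positivity pulls each $\Ao_j$ toward a multiple of $\id$, whereas spanning requires genuine spread in $n$ independent directions. The resolution above is that arbitrarily small self-adjoint perturbations of $\id$ in linearly independent directions remain linearly independent, while the operator-norm estimates guarantee positivity once $\epsilon$ is small; the normalization constraint $\sum_j\Ao_j=\id$ is then absorbed harmlessly into the definition of the last element $\Ao_n$, which automatically stays in $\os$.
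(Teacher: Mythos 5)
Your proof is correct and takes essentially the same route as the paper's: both arguments pick self-adjoint operators spanning $\os$ together with $\id$, enforce positivity by shifting and scaling with multiples of the identity (your $\Ao_j=\alpha\id+\epsilon H_{j+1}$ plays the role of the paper's rescaled positive basis elements $E_j/(m\no{E_j})$, obtained there from the decomposition $X=\tfrac12(\no{X}\id+X)-\tfrac12(\no{X}\id-X)$), and absorb the normalization into the final element $\id-\sum_{j}\Ao_j$, whose positivity is guaranteed by making the other elements small. Your dimension-counting argument for the lower bound is identical to the paper's.
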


\begin{proof}
Every operator $X\in\os$ can be written as a linear combination of two selfadjoint operators, 
\begin{equation*}
X=\frac{1}{2}(X+X^\dagger) + i \frac{1}{2i} (X-X^\dagger) \, , 
\end{equation*}
which belong to $\os$.
If $X\in\os$ is selfadjoint, then $X$ can be written as a linear combination of two positive operators, 
\begin{equation*}
X=\frac{1}{2}(\no{X}\id+X) - \frac{1}{2} (\no{X}\id - X) \, ,
\end{equation*}
which again belong to $\os$. 
Therefore, we can choose a basis for $\os$ consisting of positive operators and the identity $\id$; let $\{E_1,\ldots,E_{m},\id\}$ be such.
For each $j=1,\ldots,m$, we set $\Ao_j=\frac{1}{m\no{E_j}}E_j$. 
The element $\Ao_{m+1}$ is defined as $\id-\sum_{j=1}^m \Ao_j$.
Then the collection $\Ao_1,\ldots,\Ao_{m+1}$ forms a POVM and $\os=\As$.

For any POVM $\Bo$, the generated operator system $\osgen{\Bo}$ cannot have more linearly independent elements than $\Bo$ has outcomes.
This implies the last claim.
\end{proof}

We denote by $\os^\perp$ the orthogonal complement of an operator system $\os$ with respect to the Hilbert-Schmidt inner product. 
In particular, for a POVM $\Ao=\{\Ao_j\}_{j=1}^n$ we have 
\begin{equation*}
\Ascomp = \left\{ B \in \lh : \ \tr{B^\dagger \Ao_j}=0 \ \forall j=1,\ldots n \right\} \, .
\end{equation*}
We notice that $\tr{M}=0$ for any $M\in\os^\perp$ (since $\id\in\os$) and $(\os^\perp)^\dagger=\os^\perp$ (since $\os^\dagger=\os$).
The latter property implies that $\os^\perp$ is spanned by its selfadjoint elements.
Since $\os\oplus\os^\perp=\lh$, we have
\begin{equation}
\dim\os + \dim\oscomp = d^2 \, .
\end{equation} 
The complement $\Ascomp$ of $\As$ is related to the informational completeness of $\Ao$ in the following simple way.

\begin{proposition}[Operator systems and informational completeness]\label{prop:comp-states}
Let $\os\subseteq\lh$ be an operator system and let $\prem\subseteq\sh$ be a set of states. 
Then a POVM $\Ao$ satisfying $\As=\os$ is informationally complete w.r.t. $\prem$ iff $$(\prem-\prem)\cap\os^\perp=\{0\}.$$ In other words, for any pair of states $\varrho_1, \varrho_2\in\prem$ the following are equivalent:
\begin{itemize}
\item[(i)] $\varrho_1^\Ao = \varrho_2^\Ao$,
\item[(ii)] $\varrho_1 - \varrho_2 \in \Ascomp$.
\end{itemize}
\end{proposition}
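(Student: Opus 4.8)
The plan is to prove the displayed equivalence (i) $\Leftrightarrow$ (ii) first, since the characterization of $\prem$-informational completeness then follows immediately by restricting attention to pairs drawn from $\prem$. The entire argument rests on rewriting the condition $\varrho_1^\Ao = \varrho_2^\Ao$ in terms of the operator $M := \varrho_1 - \varrho_2$ and recognizing the resulting orthogonality relation.

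First I would unfold the definition of $\varrho^\Ao$: the equality $\varrho_1^\Ao = \varrho_2^\Ao$ means $\tr{\varrho_1 \Ao_j} = \tr{\varrho_2 \Ao_j}$ for every $j = 1, \ldots, n$, which by linearity of the trace is equivalent to $\tr{M \Ao_j} = 0$ for all $j$. The key observation is that $M$ is selfadjoint, being the difference of two density operators, so $M^\dagger = M$ and hence $\tr{M \Ao_j} = \tr{M^\dagger \Ao_j}$. Comparing with the definition $\Ascomp = \{B \in \lh : \tr{B^\dagger \Ao_j} = 0 \ \forall j\}$, this shows exactly that $M \in \Ascomp$, which is statement (ii). This establishes (i) $\Leftrightarrow$ (ii) for any pair of states in $\sh$.

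A small point I would make explicit is that, since $\As = \os$ is by definition the complex linear span of $\{\Ao_1, \ldots, \Ao_n\}$, orthogonality of $M$ to each generator $\Ao_j$ is equivalent to orthogonality of $M$ to the whole operator system; thus $\Ascomp$ coincides with $\os^\perp$ and the two formulations of (ii) agree.

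Finally, to recover the main claim I would simply specialize the equivalence to $\varrho_1, \varrho_2 \in \prem$. By definition, $\Ao$ is informationally complete w.r.t. $\prem$ precisely when $\varrho_1^\Ao = \varrho_2^\Ao$ forces $\varrho_1 = \varrho_2$ for all such pairs. Using (i) $\Leftrightarrow$ (ii), this says that the only element of $\prem - \prem$ lying in $\os^\perp$ is $0$, i.e. $(\prem - \prem) \cap \os^\perp = \{0\}$ (noting that $0$ always lies in this intersection via $\varrho_1 = \varrho_2$). The only step demanding any care is the selfadjointness identification that lets the dagger be dropped from the definition of $\Ascomp$; everything else is a direct translation between the probabilistic and the linear-algebraic formulations, so I do not anticipate a genuine obstacle.
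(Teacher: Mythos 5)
Your proof is correct and follows essentially the same route as the paper's: a direct chain of equivalences unfolding $\varrho_1^\Ao = \varrho_2^\Ao$ into $\tr{(\varrho_1-\varrho_2)A}=0$ for all $A\in\os$ and hence membership in $\Ascomp$, followed by specializing to pairs in $\prem$. Your explicit remarks that the dagger in the definition of $\Ascomp$ can be dropped because $\varrho_1-\varrho_2$ is selfadjoint, and that orthogonality to the generators $\Ao_j$ is equivalent to orthogonality to their span, are points the paper leaves implicit, but they do not change the argument.
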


\begin{proof}
For all states $\varrho_1,\varrho_2\in\prem$, we have:
\begin{eqnarray*}
 \varrho_1^\Ao = \varrho_2^\Ao & \Longleftrightarrow & \tr{\varrho_1 A} = \tr{\varrho_2 A} \quad \forall A\in\os \\
& \Longleftrightarrow & \tr{(\varrho_1-\varrho_2) A} = 0 \quad \forall A\in\os \\
& \Longleftrightarrow & \varrho_1-\varrho_2 \in \Ascomp \, .
\end{eqnarray*}
\end{proof}

For illustrative purposes, let us use the above framework to confirm the well known fact that there exists an informationally complete POVM with $d^2$ outcomes and that $d^2$ is the minimal number for which informational completeness can be achieved for all of $\mathcal{S}(\mathbb{C}^d)$ \cite{CaFuSc02}. 
The set of all operators $\mathcal{L}(\mathbb{C}^d)$ is an operator system with dimension $d^2$.
By Proposition \ref{prop:operator-system} there exists a POVM  $\Ao$ with $d^2$ outcomes and satisfying $\As=\mathcal{L}(\mathbb{C}^d)$. 
Since $\Ascomp=\{0\}$, Proposition \ref{prop:comp-states} implies that $\Ao$ is informationally complete w.r.t. all states. 
On the other hand, if $\Bo$ is a POVM with less than $d^2$ elements, then $\osgen{\Bo}^\perp$ contains a nonzero selfadjoint operator $X$ (since $\osgen{\Bo}^\perp$ is generated by its selfadjoint part). 
The states $\frac{1}{d}\id$ and  $\frac{1}{d}(\id+\no{X}^{-1}X)$ are different but they can not be distinguished by $\Bo$.

In some tomography schemes one may only infer from the expectation values of measurements rather than from the entire measurement outcome statistics.
Such expectation values are characterized  by a collection of selfadjoint operators.
Let $\{S_1,S_2,\ldots,S_n\}$ be a finite collection of selfadjoint operators.
Again, if we know the expectations $\tr{\varrho S_1}, \ldots, \tr{\varrho S_n}$, we can calculate $\tr{\varrho X}$ for any operator $X$ belonging to the linear span of $S_1,\ldots,S_n$. 

\begin{definition}[$\prem$-informational completeness for selfadjoint operators]
Let $\prem\subseteq\sh$ be a subset of states.
A collection $\{S_1,S_2,\ldots,S_n\}$ of selfadjoint operators is called \emph{informationally complete w.r.t. $\prem$} if
\begin{equation*}
 \tr{\varrho_1 S_j}=\tr{\varrho_2 S_j} \ \forall j \Rightarrow \varrho_1=\varrho_2
\end{equation*}
for all $\varrho_1,\varrho_2\in\prem$.
\end{definition}

We are interested in questions of the type:
what is the minimal number of 
\begin{itemize}
\item[1)] POVM outcomes
\item[2)] selfadjoint operators 
\end{itemize}
needed to have an informationally complete measurement scheme w.r.t. to a given premise $\prem\subseteq\sh$?

Let us notice that the identity operator $\id$ does not give any useful information since $\tr{\varrho\id}=1$ for all states $\varrho$. 
Since an operator system $\osgen{\Ao}$ related to any POVM $\Ao$ contains the identity operator $\id$, the minimal number of POVM outcomes is higher than the minimal number of selfadjoint operators.
The above two questions 1) -- 2) are thus related in the following way.

\begin{proposition}\label{prop:min}
Let $\prem\subseteq\sh$. The following are equivalent.
\begin{itemize}
\item[(i)] There exists a POVM $\Ao$ with $n$ outcomes and $\Ao$ is informationally complete w.r.t. $\prem$.
\item[(ii)] There exists a set $\{S_1,S_2,\ldots,S_{n-1}\}$ of $n-1$ selfadjoint operators and $\{S_1,S_2,\ldots,S_{n-1}\}$ is informationally complete w.r.t. $\prem$.
\end{itemize}
\end{proposition}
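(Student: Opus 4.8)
The plan is to prove the two implications separately, in both cases exploiting the fact that the identity operator carries no information (since $\tr{\varrho\id}=1$ for every state) and, dually, that differences of states are traceless. For the implication (i) $\Rightarrow$ (ii) I would simply discard the last POVM element. Given an informationally complete POVM $\Ao=\{\Ao_1,\ldots,\Ao_n\}$, the operators $S_j:=\Ao_j$ for $j=1,\ldots,n-1$ are $n-1$ selfadjoint operators. If $\tr{\varrho_1 S_j}=\tr{\varrho_2 S_j}$ for all $j\le n-1$, then the remaining component agrees as well, because $\tr{\varrho\Ao_n}=1-\sum_{j=1}^{n-1}\tr{\varrho\Ao_j}$ for every state $\varrho$; hence $\varrho_1^\Ao=\varrho_2^\Ao$, and informational completeness of $\Ao$ forces $\varrho_1=\varrho_2$. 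Thus $\{S_1,\ldots,S_{n-1}\}$ is informationally complete w.r.t. $\prem$.

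For the converse (ii) $\Rightarrow$ (i) I would pass to the operator system generated by the given operators together with the identity, $\os:=\mathrm{span}\{\id,S_1,\ldots,S_{n-1}\}$, and show that $(\prem-\prem)\cap\oscomp=\{0\}$. The key observation is that any $X=\varrho_1-\varrho_2$ with $\varrho_1,\varrho_2\in\prem$ is automatically traceless, so the constraint $\tr{X\id}=0$ coming from $\id\in\os$ is vacuous on $\prem-\prem$; consequently the condition $X\in\oscomp$ reduces to $\tr{X S_j}=0$ for all $j$, which by informational completeness of $\{S_1,\ldots,S_{n-1}\}$ forces $X=0$. Applying Proposition~\ref{prop:operator-system} then yields a POVM realizing $\os$ with $\dim\os$ outcomes, and Proposition~\ref{prop:comp-states} together with the intersection property just established shows that this POVM is informationally complete w.r.t. $\prem$.

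The one point that needs care, and which I expect to be the only real obstacle, is reconciling the dimension of the operator system with the prescribed outcome count. Since $\dim\os\le n$, the POVM constructed above has at most $n$ outcomes, with equality failing precisely when $\id,S_1,\ldots,S_{n-1}$ are linearly dependent. In that case I would pad the POVM with the appropriate number of zero operators $\nul$: these are legitimate positive POVM elements, they leave the spanned operator system unchanged, and they contribute the identically vanishing probability $\tr{\varrho\,\nul}=0$, so that informational completeness is preserved while the number of outcomes is brought up to exactly $n$. The conceptual content is entirely the remark that the trace-one normalization of states renders the identity a ``free'' direction, and this is exactly what makes the two minimal counts differ by one.
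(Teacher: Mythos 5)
Your proof is correct, and while the direction (i)$\Rightarrow$(ii) coincides with the paper's (discard $\Ao_n$ and use normalization), your direction (ii)$\Rightarrow$(i) takes a genuinely different route. The paper gives a direct, explicit construction: for each nonzero $S_j$ it sets $\Ao_j:=\big(\half\id+\half\no{S_j}^{-1}S_j\big)/(n-1)$, checks $\nul\leq\Ao_j\leq\id/(n-1)$, and defines $\Ao_n:=\id-\sum_{j=1}^{n-1}\Ao_j$; this immediately yields a POVM with exactly $n$ outcomes whose statistics determine every $\tr{\varrho S_j}$, with no dimension-counting needed. You instead route the argument through the general machinery: form the operator system $\os=\mathrm{span}\{\id,S_1,\ldots,S_{n-1}\}$, verify $(\prem-\prem)\cap\oscomp=\{0\}$ (your observation that the constraint from $\id$ is harmless on traceless differences is correct, though strictly one only needs that $X\in\oscomp$ already implies $\tr{XS_j}=0$ for all $j$), and then invoke Propositions~\ref{prop:operator-system} and~\ref{prop:comp-states}. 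Your identification of the outcome-count mismatch when $\id,S_1,\ldots,S_{n-1}$ are linearly dependent is the right point of care, and zero-padding resolves it legitimately under the paper's definition of a POVM as positive operators summing to $\id$; if one prefers to avoid effects $\nul$ (some authors exclude them), the same fix works by splitting the last element $\Ao_n$ into $n-\dim\os+1$ equal parts, which changes neither the spanned operator system nor injectivity. In exchange for this extra bookkeeping, your approach is more structural: it reuses the established equivalences rather than an ad hoc affine rescaling, and it isolates the conceptual content---that trace normalization makes $\id$ an informationally ``free'' direction, which is exactly why the two minimal counts in the definition of $\Min{\prem}$ differ by one---whereas the paper's construction has the advantage of being explicit and self-contained.
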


\begin{proof}
(i)$\Rightarrow$(ii): Suppose $\Ao$ has $n$ outcomes and $\Ao$ is informationally complete w.r.t. $\prem$.
Since $\Ao_n=\id - \sum_{j=1}^{n-1} \Ao_j$, we conclude that the set $\{\Ao_1,\ldots,\Ao_{n-1}\}$ of selfadjoint operators is informationally complete w.r.t. $\prem$.
\newline
(ii)$\Rightarrow$(i): Suppose  $\{S_1,S_2,\ldots,S_{n-1}\}$ is a set of selfadjoint operators which is informationally complete w.r.t. $\prem$.
For each nonzero $S_j$, we define
\begin{equation*}
\Ao_j:=\big(\half \id + \half \no{S_j}^{-1} S_j\big)/(n-1) \, .
\end{equation*}
Then $\nul\leq\Ao_j\leq\id/(n-1)$ and by setting $\Ao_n:=\id - \sum_{j=1}^{n-1} \Ao_j$ we obtain a POVM which is informationally complete w.r.t. $\prem$.
\end{proof}

As a consequence of Proposition \ref{prop:min} we only need to calculate the minimal number of either selfadjoint operators or POVM elements. 

\begin{definition}(Minimal informationally complete measurements)
Let $\prem\subseteq\sh$ be any subset of states. 
We denote by $\Min{\prem}$ 
\begin{quote}
- the minimal number of selfadjoint operators which are informationally complete w.r.t. $\prem$.
\end{quote}
or, equivalently (by Prop. \ref{prop:min}),
\begin{quote}
- the minimal number of positive operators determining a POVM which is informationally complete w.r.t. $\prem$
(i.e., the number of POVM elements subtracted by one).
\end{quote}
\end{definition}
The remaining part of this work is devoted to deriving upper and lower bounds on $\Min{\prem}$ either for various concrete subsets $\prem$ or depending on specific properties of $\prem$ -- in particular its dimensionality and topology.

\section{Upper bounds}\label{sec:upper}

\subsection{States with bounded rank}\label{sec:rank}

As the first class of examples, we consider the subset $\prem_r:=\{\varrho\in\sh : {\rm rank}(\varrho)\leq r\}$ of all states whose rank is bounded by a number $r<d/2$ where $d=\dim{\mathcal{H}}$.
How many measurement outcomes suffice in order to identify a state taken from $\prem_r$?

\begin{theorem}[States with bounded rank]\label{prop:rank}
If $1\leq r<d/2$, then there exists a POVM $\Ao$ which is informationally complete w.r.t. $\prem_r$ and has $4r(d-r)$ outcomes.
Therefore, 
\begin{equation*}
\Min{\prem_r}\leq 4r(d-r)-1 \, .
\end{equation*}
\end{theorem}

\begin{proof}
We will construct a subspace $\Bs$ of $d\times d$ matrices with the properties that
\begin{itemize}
	\item[(a)] $\Bs^\dagger=\Bs$,
	\item[(b)] $\tr{B}=0$ for every $B\in\Bs$,
	\item[(c)] $\dim\Bs=(d-2r)^2$,
	\item[(d)] $\rank{B}\geq 2r+1$ for every nonzero $B\in\Bs$.
\end{itemize}
From (a) -- (b) it follows that $\os:=\Bscomp$ is an operator system.
By Prop. \ref{prop:operator-system} there exists a POVM  $\Ao$ such that $\As=\os$ and $\Ao$ has $\dim\os$ outcomes.
From (c) we get that $\Ao$ has $d^2-(d-2r)^2=4r(d-r)$ outcomes.
Finally, if $\varrho_1,\varrho_2\in\prem_r$, then $\rank{\varrho_1-\varrho_2}\leq 2r$.
By Prop. \ref{prop:comp-states} this implies that a POVM $\Ao$ is informationally complete w.r.t. $\prem_r$ if $\rank{B}\geq 2r+1$ for every nonzero operator $B\in\Ascomp$.
This is guaranteed by (d). Hence, constructing a subspace $\Bs$ with the properties (a) -- (d) will prove the proposition.

The main part of our construction follows \cite{CuMoWi08}.
The following fact will be needed.
Let $M$ be a totally nonsingular $m\times m$-matrix with positive entries.
(Recall that a matrix is called totally nonsingular if all of its minors are nonzero.)
For instance, a Vandermonde matrix of the form
\begin{equation*}
M=\left( \begin{array}{ccccc}
1 & \alpha_1 & \alpha_1^2 & \cdots & \alpha_1^{m-1} \\
1 & \alpha_2 & \alpha_2^2 & \cdots & \alpha_2^{m-1} \\
\vdots & \vdots & \vdots && \vdots \\
1 & \alpha_m & \alpha_m^2 & \cdots & \alpha_m^{m-1}
\end{array} \right)
\end{equation*}
with $0<\alpha_1<\alpha_2<\cdots<\alpha_m$ has strictly positive minors \cite{Fallat01}.
Since $M$ is totally nonsigular, any linear combination of $\ell$ columns of $M$ contains at most $\ell-1$ zero elements:
if some linear combination of $\ell$ columns of $M$ would contain $\ell$ zero elements, then we could find a $\ell\times\ell$ submatrix of $M$ with linearly dependent columns. 
This contradicts the requirement that all the minors of $M$ are nonzero. 

We will now define a set of $d\times d$ matrices that span the sought subspace $\Bs$.  
By the $k$th diagonal of a matrix $[M_{ij}]$ we mean the elements $M_{ij}$ with $i-j=d-k$.
(In other words, we label the diagonals from the lower left corner upwards. The main diagonal is then the $d$th diagonal.)

For each $k$ satisfying $2r+1\leq k \leq d-1$, we build up $k-2r$ matrices as follows.
We choose $k-2r$ columns from a totally nonsingular $k\times k$ -matrix and we put them to the $k$th diagonal and $0$'s elsewhere, hence obtaining $k-2r$ linearly independent matrices.
Any matrix $P$ which is a linear combination of these $k-2r$ matrices has at least $2r+1$ nonzero elements on the $k$th diagonal. Since all the matrix elements not in the $k$th diagonal are $0$, we see that the $P$ has rank at least $2r+1$.

We also take all transposes of the previously constructed matrices to our spanning set of matrices. In addition we construct and add to the set $d-2r$ diagonal matrices (with nonzero entries only on the $k$=$d$th diagonal). To this end, we choose $d-2r$ columns from a totally nonsingular $d\times d$ -matrix $M$.
Let $v_1,\ldots,v_{d-2r}$ denote these column vectors.
Next, we choose a real vector $u$ which has no zero entries and which is orthogonal to every $v_1,\ldots,v_{d-2r}$.
A possible choice is, for instance, the last row from the inverse matrix of $M$.
Then the new vectors $\tilde{v}_1,\ldots,\tilde{v}_{d-2r}$ defined as entrywise products of $v_1,\ldots,v_{d-2r}$ with $u$ each have the property that their components sum up to zero. Hence, from these vectors we can construct $d-2r$ traceless diagonal matrices such that again any non-zero linear combination of them has rank at least $2r+1$.

In total, we have built up $d-2r+2\sum_{k=2r+1}^{d-1}(k-2r)=(d-2r)^2$ linearly independent matrices.
The previously mentioned subspace $\Bs$ is the linear span of these matrices.
The properties (a)-(c) are immediate consequences of the construction.
To verify (d), suppose that $B\in\Bs$.
Let $k_B$ be the largest $k$ such that the $k$th diagonal of $B$ contains nonzero elements.
Then the $k_B$th diagonal contains actually $2r+1$ nonzero elements. 
The $(2r+1) \times (2r+1)$ submatrix having those $2r+1$ nonzero elements in its main diagonal is lower triangular, therefore has nonzero determinant.
It follows that the rank of $B$ is at least $2r+1$. 
\end{proof}

Let us remark that a recently introduced method based on compressed sensing uses $O(dr\log(d)^2)$ outcomes to identify a rank-$r$ state with high probability \cite{GrLiFlBeEi10}.
The number of outcomes given in Theorem \ref{prop:rank} therefore beats the compressed sensing approach. The latter, however, might be advantageous regarding the simplicity of the classical post-processing, robustness and verifiability of the assumption.

\subsection{Pure states}\label{sec:pic}

We will now have a closer look at the set $\prem_1$ of pure states.
For this case, Theorem \ref{prop:rank} implies that $\Min{\prem_1}\leq 4d-5$.
We will first provide some simple arguments showing that indeed $\Min{\prem_1}=4d-5$ for $d=2,3$.
In general, we will see later, based on topological reasoning, that the leading order $4d$ is the best possible.

We remark that Flammia et al. proved that a POVM which is informationally complete w.r.t. the set of pure states has at least $2d$ outcomes \cite{FlSiCa05}. 
They also constructed a POVM with $2d$ elements capable of distinguishing almost all (but not all) pairs of pure states. A similar construction was given by Finkelstein \cite{Finkelstein04}.

\begin{proposition}\label{prop:not-pic}
For a POVM $\Ao$, the following conditions are equivalent:
\begin{itemize}
\item[(i)] $\Ao$ is not informationally complete w.r.t. the set of pure states.
\item[(ii)] $\Ascomp$ contains a selfadjoint operator $T\neq 0$ with $\rank{T}\leq 2$.
\item[(iii)] $\Ascomp$ contains a selfadjoint operator $T$ with $\mathrm{rank}(T)=2$.
\end{itemize}
\end{proposition}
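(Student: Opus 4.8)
The plan is to reduce everything to the linear-algebraic criterion of Proposition~\ref{prop:comp-states}, which says that $\Ao$ fails to be informationally complete w.r.t. the pure states $\prem_1$ precisely when there exist \emph{distinct} pure states $\varrho_1,\varrho_2$ with $\varrho_1-\varrho_2\in\Ascomp$. The whole statement then rests on a clean dictionary between such differences and rank-$2$ selfadjoint elements of $\Ascomp$, and I would prove it by running the cycle (i)$\Rightarrow$(iii)$\Rightarrow$(ii)$\Rightarrow$(i).

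First I would record two elementary facts. \textbf{Fact A:} if $\varrho_1=\kb{\psi}{\psi}$ and $\varrho_2=\kb{\phi}{\phi}$ are distinct pure states, then $T:=\varrho_1-\varrho_2$ is selfadjoint, traceless, and of rank exactly $2$. Tracelessness is immediate; to pin the rank I would compute $\tr{T^2}=2\bigl(1-\mo{\ip{\psi}{\phi}}^2\bigr)$, and since distinctness forces $\mo{\ip{\psi}{\phi}}<1$ by Cauchy--Schwarz, the two nonzero eigenvalues are $\pm\sqrt{1-\mo{\ip{\psi}{\phi}}^2}\neq 0$, so $\rank{T}=2$. \textbf{Fact B:} recall that every $M\in\Ascomp$ satisfies $\tr{M}=0$ because $\id\in\os$. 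Hence any \emph{nonzero} selfadjoint $T\in\Ascomp$ is automatically traceless, and a nonzero traceless selfadjoint operator must carry eigenvalues of both signs; combined with $\rank{T}\le 2$ this forces $\rank{T}=2$ with eigenvalues $\pm\mu$ for some $\mu>0$.

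With these in hand the implications are short. For (i)$\Rightarrow$(iii) I would take the distinct pure states supplied by Proposition~\ref{prop:comp-states} and apply Fact~A to produce a rank-$2$ selfadjoint operator $T=\varrho_1-\varrho_2\in\Ascomp$. The step (iii)$\Rightarrow$(ii) is trivial, since a rank-$2$ operator is nonzero and of rank $\le 2$. For (ii)$\Rightarrow$(i) I would start from a nonzero selfadjoint $T\in\Ascomp$ with $\rank{T}\le 2$, invoke Fact~B to obtain eigenvalues $\pm\mu$, and then \emph{rescale}: writing $e_1,e_2$ for unit eigenvectors of $T$ associated to $+\mu,-\mu$, the operator $T/\mu=\kb{e_1}{e_1}-\kb{e_2}{e_2}$ is the difference of two orthogonal --- hence distinct --- pure states, and it still lies in $\Ascomp$ because $\Ascomp$ is a linear subspace. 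Proposition~\ref{prop:comp-states} then yields the failure of informational completeness.

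The one point that needs care --- and the only place where a naive argument could break --- is the (ii)$\Rightarrow$(i) direction: a rank-$2$ element of $\Ascomp$ may a priori have eigenvalues $\pm\mu$ with $\mu>1$, in which case $T$ itself is \emph{not} a difference of two pure states, whose nonzero eigenvalues lie in $(0,1]$ by Fact~A. The resolution is exactly the rescaling trick, which exploits that $\Ascomp$ is closed under scalar multiplication; this is what makes the rank condition, rather than any eigenvalue-magnitude condition, the correct invariant. I expect this normalization observation to be the crux, with everything else reducing to Proposition~\ref{prop:comp-states} and the trace/rank bookkeeping of Facts~A and~B.
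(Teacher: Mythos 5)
Your proof is correct and takes essentially the same route as the paper's: both reduce everything to Proposition~\ref{prop:comp-states}, observe that a nonzero traceless selfadjoint operator cannot have rank $1$ (so the eigenvalues are $\pm\mu$), and use the same rescaling $T/\mu=P_1-P_2$ to recover a pair of distinct pure states. Your additions --- the explicit $\tr{T^2}$ computation in Fact~A and arranging the implications as a cycle (i)$\Rightarrow$(iii)$\Rightarrow$(ii)$\Rightarrow$(i) instead of the paper's (i)$\Rightarrow$(iii), (ii)$\Leftrightarrow$(iii), (iii)$\Rightarrow$(i) --- are merely cosmetic.
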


\begin{proof}
(i)$\Rightarrow$(iii): Suppose $\Ao$ is not informationally complete w.r.t. pure states. 
By Prop. \ref{prop:comp-states} there exists two pure states $\varrho_1\neq\varrho_2$ such that $\varrho_1 - \varrho_2=:T \in \Ascomp$. 
The operator $T$ is selfadjoint and $\rank{T}= 2$.
\newline
(ii)$\Leftrightarrow$(iii) is seen by observing that there is no traceless selfadjoint operator with rank 1. 
\newline
(iii)$\Rightarrow$(i)
Suppose there is a selfadjoint operator $T\in\Ascomp$ with rank 2. 
Since $\tr{T}=0$ it has two nonzero eigenvalues $\pm\lambda$. The operator $T':=\frac{1}{\lambda} T$ then has spectral decomposition $T'=P_1-P_2$, where $P_1$ and $P_2$ are one-dimensional projections, hence pure states.
By Prop. \ref{prop:comp-states} the POVM $\Ao$ cannot distinguish $P_1$ and $P_2$.
\end{proof}

From Proposition \ref{prop:not-pic} we conclude the following simple characterization.

\begin{corollary}[Pure state informationally complete measurements]\label{cor:pic}
A POVM $\Ao$ is informationally complete w.r.t. the set of pure states if and only if every nonzero selfadjoint operator $T\in\Ascomp$ has $\rank{T}\geq 3$. 
\end{corollary}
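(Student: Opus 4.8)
The plan is to obtain the corollary as the direct logical contrapositive of the equivalence between (i) and (ii) established in Proposition~\ref{prop:not-pic}, so that essentially no new work is required beyond a careful negation of each side.

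First I would observe that the hypothesis of the corollary, namely that $\Ao$ \emph{is} informationally complete w.r.t.\ the set of pure states, is exactly the negation of condition (i) of Proposition~\ref{prop:not-pic}, which reads ``$\Ao$ is \emph{not} informationally complete w.r.t.\ the set of pure states.'' Next I would check that the conclusion of the corollary, that every nonzero selfadjoint $T\in\Ascomp$ has $\rank{T}\geq 3$, is precisely the negation of condition (ii). Condition (ii) asserts the \emph{existence} of some nonzero selfadjoint $T\in\Ascomp$ with $\rank{T}\leq 2$; its negation is the universal statement that no such operator exists, i.e.\ that every nonzero selfadjoint $T\in\Ascomp$ fails to satisfy $\rank{T}\leq 2$. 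The one point worth spelling out here is that the rank of an operator is a nonnegative integer, so for a nonzero $T$ the condition $\rank{T}\leq 2$ is negated exactly by $\rank{T}\geq 3$; there is no intermediate case to worry about.

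With these two identifications in hand the corollary follows at once: Proposition~\ref{prop:not-pic} gives (i)$\Leftrightarrow$(ii), and negating both sides of a logical equivalence preserves it, yielding $\neg$(i)$\Leftrightarrow\neg$(ii), which is precisely the asserted statement that $\Ao$ is informationally complete w.r.t.\ pure states iff every nonzero selfadjoint operator in $\Ascomp$ has rank at least $3$. Since the whole argument is a syntactic manipulation of an already-proved equivalence, there is no genuine obstacle; the only care needed is to match the quantifiers correctly when negating (an existential over rank-$\leq 2$ operators becomes a universal rank-$\geq 3$ statement) and to invoke the integrality of the rank so that $\leq 2$ and $\geq 3$ are exact complements. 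One could instead unfold everything back to Proposition~\ref{prop:comp-states} and redo the pure-state difference argument directly, but that would merely reprove Proposition~\ref{prop:not-pic}, so the contrapositive route is the natural and shortest one.
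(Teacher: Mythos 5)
Your proposal is correct and matches the paper exactly: the paper states Corollary~\ref{cor:pic} as an immediate consequence of the (i)$\Leftrightarrow$(ii) equivalence in Proposition~\ref{prop:not-pic}, with no further argument, which is precisely your contrapositive reading (including the observation that integrality of the rank makes $\rank{T}\leq 2$ and $\rank{T}\geq 3$ exact complements for nonzero $T$).
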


\begin{example}[The qubit case]
It is an immediate consequence of Corollary \ref{cor:pic} that in the qubit case (i.e. $\dim\hi=2$) a POVM $\Ao$ is informationally complete w.r.t. pure states if and only if $\Ascomp=\{0\}$. 
Therefore, in the qubit case informational completeness for pure states implies informational completeness for all states.  
One can also easily see this by a direct inspection of the Bloch sphere.
\end{example}

Note that if $\Ascomp$ contains an operator $M$ with $\rank{M}=1$, then $M+M^\dagger$ is selfadjoint and $\rank{M+M^\dagger}\leq 2$.
Thus, $\Ao$ is not informationally complete w.r.t. pure states if $\Ascomp$ contains a rank-1 operator.
On the other hand, if $\Ascomp$ contains an operator $M$ with $\rank{M}=2$, this does \emph{not} imply that there exists a selfadjoint operator $T$ with $\rank{T}=2$. 
For instance, the subspace
\begin{equation*}
	\left\{ 
	\begin{pmatrix}
0 & 0 & \alpha & 0 \\
0 & 0 & 0 & \alpha \\
\beta & 0 & 0 & 0 \\
0 & \beta & 0 & 0
\end{pmatrix}
: \alpha,\beta \in \C \right\}
\end{equation*}
contains rank-2 matrices but no rank-2 selfadjoint matrix.

As an application of Corollary \ref{cor:pic}, we can easily characterize all POVMs in dimension $3$ which are  informationally complete w.r.t. the set of pure states.

\begin{proposition}[Pure state informational completeness in dimension $3$]\label{prop:PIC-dim3}
A POVM $\Ao$ is informationally complete w.r.t. the set of all pure states in $\mathcal{S}(\mathbb{C}^3)$  if and only if it falls into one of the two classes:
\begin{itemize}
	\item $\Ascomp=\{0\}$ (i.e. $\Ao$ is informationally complete w.r.t. all states).
	\item $\Ascomp=\{cT:c\in\C\}$ for some invertible selfadjoint operator $T$ with $\tr{T}=0$.
\end{itemize}
In particular, a minimal POVM which is informationally complete w.r.t. pure states has $8$ outcomes.
\end{proposition}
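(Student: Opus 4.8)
The plan is to run everything through the rank criterion of Corollary \ref{cor:pic}. In dimension $3$ a nonzero selfadjoint operator has $\rank{T}\geq 3$ precisely when it is invertible, so $\Ao$ is informationally complete w.r.t. pure states if and only if every nonzero selfadjoint operator in $\Ascomp$ is invertible. I will also use throughout the structural facts already recorded in the text: $\Ascomp$ is a complex subspace of $\mathcal{L}(\mathbb{C}^3)$ that is traceless and closed under $\dagger$, so it decomposes as $\Ascomp=(\Ascomp)_{sa}\oplus i(\Ascomp)_{sa}$ with $\dim_\C\Ascomp=\dim_\R(\Ascomp)_{sa}$.

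First I would dispatch the easy direction, namely that the two listed classes are informationally complete w.r.t. pure states. If $\Ascomp=\{0\}$ this is immediate from Proposition \ref{prop:comp-states}. If $\Ascomp=\{cT:c\in\C\}$ with $T$ invertible selfadjoint and $\tr{T}=0$, then the selfadjoint elements of $\Ascomp$ are exactly the real multiples $rT$, each of which is invertible for $r\neq 0$; Corollary \ref{cor:pic} then gives the claim.

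The substance lies in the converse. Assuming $\Ao$ is informationally complete w.r.t. pure states, I must show $\dim_\C\Ascomp\leq 1$ and, when this dimension is $1$, that the generator is invertible. Suppose for contradiction that $\dim_\C\Ascomp\geq 2$, so that $(\Ascomp)_{sa}$ contains two $\R$-linearly independent traceless selfadjoint operators $T_1,T_2$. By informational completeness $T_2$ is invertible, so $p(t):=\det(T_1+tT_2)$ is a polynomial in $t$ with real coefficients (real because $T_1+tT_2$ is selfadjoint for real $t$) and of degree exactly $3$ (its leading coefficient is $\det T_2\neq 0$). The crux of the whole argument is that an odd-degree real polynomial necessarily has a real root $t_0$; the operator $T_1+t_0T_2$ is then nonzero by independence, selfadjoint, traceless, and singular, hence of rank $\leq 2$, contradicting Corollary \ref{cor:pic}. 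This forces $\dim_\C\Ascomp\leq 1$. If the dimension is $0$ we land in the first class; if it is $1$, then $(\Ascomp)_{sa}=\R T$ for a traceless selfadjoint $T$ which must be invertible, whence $\Ascomp=\C T$ and we land in the second class.

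Finally, the outcome count follows from $\dim\os+\dim\oscomp=d^2=9$ together with Proposition \ref{prop:operator-system}. In the second (minimal) class $\dim\Ascomp=1$, so $\dim\As=8$ and a POVM realizing this operator system with exactly $8$ outcomes exists; conversely any informationally complete POVM lies in one of the two classes, hence has $\dim\As\geq 8$ and thus at least $8$ outcomes, while a concrete witness such as $T=\mathrm{diag}(1,1,-2)$ shows the second class is nonempty so that $8$ is attained (consistent with $\Min{\prem_1}=4d-5=7$). I expect the main obstacle to be precisely the real-rootedness step: it is the odd dimension $d=3$ that makes $p$ a cubic and guarantees a real root, and this is exactly the point at which the argument would fail in even dimensions.
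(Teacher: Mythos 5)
Your proposal is correct and follows essentially the same route as the paper: both reduce everything to Corollary \ref{cor:pic} and then show $\dim\Ascomp\geq 2$ is impossible by producing a singular nonzero selfadjoint element from a pencil of two independent selfadjoint operators in $\Ascomp$, exploiting that the determinant in odd dimension $d=3$ must change sign (your ``odd-degree cubic has a real root'' is the same fact the paper extracts via the intermediate value theorem on $t\mapsto\det\bigl(tX+(1-t)Y\bigr)$ after sign normalization). Your handling of the $8$-outcome count via $\dim\As+\dim\Ascomp=9$ and Proposition \ref{prop:operator-system}, including the explicit witness $T=\mathrm{diag}(1,1,-2)$, matches the paper's conclusion as well.
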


\begin{proof}
We first prove the following: if $\dim\As\leq 7$, then $\Ao$  cannot distinguish all pairs of distinct pure states.
We need to show that there exists a singular selfadjoint operator $T\in\Ascomp$.
The claim then follows from Cor. \ref{cor:pic}.
From $\dim\As\leq 7$ follows that $\dim\Ascomp \geq 2$.
There thus exist two linearly independent selfadjoint operators $X,Y\in\Ascomp$.
If either $X$ or $Y$ is singular, we are done. 
So let us assume that $\det(X)>0$ and $\det(Y)<0$ (otherwise we redefine $X\to -X$ or $Y\to-Y$). By the intermediate value theorem the function $t\in\mathbb{R}\mapsto \det\big(tX+(1-t)Y\big)$  has to be zero for some $t=t_0\in (0,1)$. The corresponding operator $t_0X+(1-t_0)Y$ is thus a singular element of $\As^\perp$.

We conclude that if $\Ao$ is informationally complete for the set of all pure states, then either $\dim\As=8$ or $\dim\As=9$.
In the latter case $\Ao$ is informationally complete for all states. 
If $\dim\As=8$, then $\dim\Ascomp=1$ and $\Ascomp$ is thus generated by a single selfadjoint operator $T$.
From Cor. \ref{cor:pic} follows that $\Ao$ is informationally complete w.r.t. all pure states exactly when $T$ is invertible. 
\end{proof}

Proposition \ref{prop:PIC-dim3} shows how to construct POVMs that are informationally complete w.r.t. pure states for $d=3$:
fix any invertible selfadjoint operator $T$ with $\tr{T}=0$. Then the complement space is an operator system from which the sought POVM can be obtained as a spanning set (see the proof of Proposition \ref{prop:operator-system}).

\begin{example}[Necessity of prior information]
The characterization given in Proposition \ref{prop:PIC-dim3} helps to illustrate one possible drawback of POVMs which are merely informationally complete w.r.t. to  a subset of states: even if a POVM $\Ao$ can distinguish any pair of distinct pure states, it may not be capable of distinguishing pure states from mixed states. 
In other words, we may not be able to verify the premise from the measurement outcome statistics.

To give an example, let $\Ao$ be a POVM in $\C^3$ with $\Ascomp=\{cT:c\in\C\}$ for some invertible selfadjoint operator $T$ with $\tr{T}=0$.
Suppose $T$ has a single positive eigenvalue (if not, take $-T$).
Let $T=\lambda_1 P_1-(\lambda_2 P_2 + \lambda_3 P_3)$, $\lambda_j>0$, be the spectral decomposition of $T$ into one-dimensional projections $P_j$.
The POVM $\Ao$ cannot distinguish the pure state $P_1$ from the mixed state $\varrho=\lambda_2/\lambda_1 P_2 + \lambda_3/\lambda_1 P_3$.
We conclude that by measuring $\Ao$ one cannot identify the pure state $P_1$ from the measurement outcome statistics \emph{without} making use of the prior knowledge that the state is pure.
\end{example}

We already noted that Theorem \ref{prop:rank} leads to the general upper bound $\Min{\prem_1}\leq 4d-5$ albeit without yielding an explicit set of operators.
Now, we give another proof of this result via a simple direct construction.
The construction is inspired by a topological embedding given by James in \cite{James59}.
Consider two types of matrices $X_\alpha$ and $Y_\beta$  which we label by integers $\alpha=1,\ldots,2d-2$ and $\beta=1,\ldots,2d-3$ respectively. 
The $X_\alpha$'s are taken to be such that $\big(X_{\alpha}\big)_{kl}=\delta_{k+l,\alpha+1}$, i.e., there are $1$'s along the $\alpha$'th anti-diagonal and zeroes elsewhere. 
The $Y_\beta$'s are similarly defined with nonzero entries solely along the anti-diagonals, in this case $\big(Y_{\beta}\big)_{kl}=0$ unless $k+l=\beta+2$. 
The entries are chosen such that the matrices are anti-symmetric with entries $i$ below the diagonal.  The constructed matrices thus have the following structure:
\be
X \sim\left(\begin{array}{ccccc} & & &1&\\ & & 1 & &\\ & 1 & & &\\ 1 & & & &\\ & & & &\\
\end{array}\right), \quad \mbox{and}\quad Y\sim\left(\begin{array}{ccccc} & & &-i&\\ & & -i & &\\ & i & & &\\ i & & & &\\ & & & &\\
\end{array}\right).
\ee

\begin{theorem}[Pure state informational completeness--explicit construction]\label{prop:james}
The set $\{S_j\}:=\{X_\alpha,Y_\beta\}$ consisting of $4d-5$ selfadjoint operators is informationally complete w.r.t. the set of all pure states on $\mathbb{C}^d$.
\end{theorem}

\begin{proof}
Following \cite{James59} we use the following type of auxiliary matrices. 
Consider a set of upper-triangular matrices $\{C_\gamma\in\C^{d\times d}\}_{\gamma=2,\ldots,2d}$ which are such that  $\big(C_\gamma\big)_{kl}=0$ if $k+l>\gamma$ and $\big(C_\gamma\big)_{kl}\neq0$ if $k+l=\gamma$.
We claim that for all vectors $x,y\in\C^d$ the following holds:
\begin{eqnarray}\label{eq:JamesLemma}
&& \textrm{if}\quad \<x|C_\gamma|x\>=\<y|C_\gamma|y\> \ \textrm{for all}\ \gamma \, , \\ \nonumber
&& \textrm{then}\ y=e^{i\varphi}x \ \textrm{for some}\ \varphi\in\R \, .
\end{eqnarray}
To prove this claim, assume that the $n$'th component of $x$ is the first with a non-zero entry $x_n\neq 0$. 
Then $\<x|C_{2n}|x\>=\big(C_{2n}\big)_{n,n}|x_n|^2$ which by the hypothesis in Eq.\eqref{eq:JamesLemma} implies that there is a $\varphi\in\R$ so that $y_l=e^{i\varphi}x_l$ holds for all $l\leq n$ (since $x_l=y_l=0$ for $l<n$ and $|x_n|=|y_n|$). By induction we can now prove that the same proportionality has to hold for all other components. So assume for some $m\geq n$ that  $y_l=e^{i\varphi}x_l$ holds for all $l\leq m$. Then for $\gamma=m+n+1$
\bea \<x|C_\gamma|x\> &=& \sum_{i+j\leq\gamma}\big(C_\gamma\big)_{ij}\bar{x}_i x_j \\
&=& \big(C_\gamma\big)_{n,m+1}e^{i\varphi}\bar{y}_n x_{m+1} +\!\!\!\!\sum_{i+j\leq m+n} \big(C_\gamma\big)_{ij}\bar{y}_i y_j\nonumber,
\eea
where we replaced $x_l\ra e^{-i\varphi}y_l$ for all $l\leq m$ and exploited that $C_\gamma$ is upper triangular and that $x_l=0$ for all $l<n$. Together with the hypothesis in Eq.\eqref{eq:JamesLemma} and the assumption that $\big(C_\gamma\big)_{n,m+1}\neq 0$ this implies indeed that $y_{m+1}=e^{i\varphi}x_{m+1}$. 

We now exploit Eq.\eqref{eq:JamesLemma} for specific matrices which we construct as $C_{2d}=\1$ and for $\gamma=2,\ldots,2d-1$ as
\be \big(C_\gamma\big)_{kl}=\left\{\begin{array}{ll}
                                     \delta_{k+l,\gamma}, & \ k<l, \\
                                     1/2, & \ k=l=\gamma/2 \\
                                     0, & \ \mbox{otherwise.}
                                   \end{array}\right.\ee
Note that $C_\gamma=(X_{\gamma-1}+i Y_{\gamma-2})/2$ for $\gamma=3,\ldots,2d-1$ and $C_2=X_1/2$. Hence, if $\<x|S_j|x\>=\<y|S_j|y\>$ for every $j$, then $\<x|C_\gamma|x\>=\<y|C_\gamma|y\>$ for every $\gamma=2,\ldots,2d-1$. 
The remaining condition $\<x|C_{2d}|x\>=\<y|C_{2d}|y\>$ holds due to $\no{x}=\no{y}$.
Therefore it follows from Eq.\eqref{eq:JamesLemma} that the set $\{S_j\}$ is informationally complete w.r.t. the set of all pure states.
\end{proof}

It is possible to slightly improve the upper bound $\Min{\prem_1}\leq 4d-5$ at the cost of having to deal with a more complicated set of operators introduced in the work of Milgram \cite{Milgram67}:

\begin{theorem}[Pure state informational completeness -- improved bound]\label{prop:milgram}
Let $\alpha$ denote the number of $1$'s in the binary expansion of $d-1$.
There exists a collection of $m$ selfadjoint operators which is informationally complete w.r.t. pure states in $\mathbb{C}^d$, if
\be m=\left\{\begin{array}{ll} 4d-3-\alpha &\ \mbox{for odd }d,\\
4d-4-\alpha &\ \mbox{for even } d\geq 4.
\end{array}\right.\ee
\end{theorem}

For odd $d$ this upper bound can be worse than the previously derived $4d-5$, but for most dimensions it is below $4d-5$.
Notice that $\alpha$ satisfies $1\leq\alpha\leq \log_2(d)$. 
We will see later that this improvement is, in fact, nearly optimal. 
With this refined upper bound we can calculate the exact value of $\Min{\prem_1}$ for small dimensions $d$ in Sec. \ref{sec:geometry}.

\begin{proof}
The construction of the operators is based on the work of Milgram \cite{Milgram67}. 
We will only argue why this corresponds to a proper measurement scheme rather than reproducing the rather cumbersome construction.

For ${m}$ as in the proposition, Milgram constructed a set of ${m}$ bilinear maps, i.e., matrices $A_j\in\C^{d\times d}$, $j=1,\ldots {m}$ which have the following properties:

(i) \emph{Vanishing real inner product} in the sense that for all $x\in\C^d$ we have $\<x,A_j x\>_\R=0$ for the real inner product $\<x,y\>_\R:=\half(\<x,y\>+\overline{\<x,y\>})$. That is, each $A_j$ has to be skew-symmetric w.r.t. the real inner product and thus anti-selfadjoint w.r.t. to the standard complex inner product, i.e., $A_j^\dagger=-A_j$. In order to see the latter, note that every matrix $A$ can be written as a sum $A=A^s + A^a$, where $A^s=\half(A+A^\dagger)$ is selfadjoint and $A^a=\half(A-A^\dagger)$ is anti-selfadjoint.
For each $x\in\C^d$, we see that $\<x,A^s x\>$ is real and $\<x,A^a x\>$ is imaginary.
Therefore, the condition $\<x,Ax\>_\R=0$ for all $x\in\C^d$ is equivalent to $A=A^a$.

(ii) \emph{Completeness.} We can define matrices $T_j=iA_j$, which according to (i) are selfadjoint, and as noted by Mukherjee \cite{Mukherjee81} such that the map $f:\C^d\ra\R^m$ defined via $f(x)_j=\<x,T_j x\>$ has the property that $f(x)=f(y)$ implies that $x$ is proportional to $y$.

The set of ${m}$ selfadjoint matrices $T_j$ therefore leads to a measurement scheme which is informationally complete w.r.t. the set of pure states.
\end{proof}

\subsection{Generic bounds depending on fractal dimension}\label{sec:fractal}

In this part we will discuss general upper bounds on $\Min{\prem}$ which depend only on the dimensionality of $\prem$. To this end, it is useful to regard any measurement scheme as a linear map between real Euclidean vector spaces. More precisely, we identify the set of selfadjoint operators on $\C^d$ with $\R^{d^2}$ and the set of $m$-tuples of selfadjoint operators in $\mathcal{L}(\C^d)$ as linear maps from $\R^{d^2}$ into $\R^m$.

Let us assume that $\prem$ is a \emph{closed} subset of $\sh$.
Then $\prem$ can be identified with a compact subset of $\R^{d^2}$.
The \emph{Minkowski dimension} (also called \emph{box dimension}) $D(\prem)$ of $\prem$ is obtained by considering the minimal number $N_\epsilon(\prem)$ of $\epsilon$-balls needed to cover $\prem$ and taking the limit 
\begin{equation*}
D(\prem):=\lim\sup_{\epsilon\ra 0}\frac{\log\big(N_\epsilon(\prem)\big)}{\log(1/\epsilon)}.\end{equation*}
For instance, if $\prem$ is a smooth manifold of real dimension $d(\prem)$, then $D(\prem)=d(\prem)$. 

By Man\'e's theorem \cite{Mane81,HuKa99} for a compact set $\prem$ almost any (in the Lebesgue measure sense) linear map $\Lambda$ from $\R^{d^2}$ into $\R^{m}$ is injective on $\prem$ if $m>2D(\prem)$. 
The injectivity of $\Lambda$ is clearly equivalent to the property that the corresponding set of $m$ selfadjoint operators is informational complete w.r.t. $\prem$.
Hence, we conclude with the following result.

\begin{theorem}[Informational completeness of generic measurements]\label{prop:mane}
Let $\prem\subseteq\sh$ be a closed subset of the set of density matrices with Minkowski dimension $D(\prem)$.
Then almost any (in the Lebesgue measure sense) collection of $m>2D(\prem)$ selfadjoint operators is informationally complete w.r.t. $\prem$. In particular, $\Min{\prem}\leq 2D(\prem)+1$.
\end{theorem}

\begin{example}
In principle, this bound can be refined to $m>\delta_{\prem-\prem}$, where $\delta_{\prem-\prem}$ is the \emph{Hausdorff dimension} of the set $\prem-\prem:=\{\varrho_1-\varrho_2|\varrho_i\in\prem\}$ \cite{Robinson09}.
 This bound is generally better since $\delta_{\prem-\prem}\leq D(\prem-\prem)\leq 2D(\prem)$, but it may be more difficult to handle. 
We also mention that for the inverse mappings H\"older continuity can be proven and the respective constants can be bounded \cite{BeEdFoNi93,HuKa99}.
\end{example}

Let us apply Theorem \ref{prop:mane} to the case of pure states. 
This is a smooth manifold of real dimension $2d-2$, hence $D(\prem_1)=2d-2$. 
Thus, almost any collection of $4d-3$ selfadjoint operators is informationally complete w.r.t. pure states.

A related example for which it might be difficult to obtain good bounds by other means is the set of depolarized pure states. Let $\sigma\in\mathcal{S}(\C^d)$ be any mixed state and define $\prem_\sigma:=\{\varrho\in\mathcal{S}(\C^d)|\varrho\in\lambda\sigma+(1-\lambda)\prem_1,\;\lambda\in[0,1] \}$. Then $D(\prem_\sigma)=D(\prem_1)+1$ so that almost any set of $4d-1$ selfadjoint operators is informationally complete w.r.t. $\prem_\sigma$.

\section{Lower bounds}\label{sec:lower}

The main idea which in this section provides lower bounds on $\mathfrak{m}$ is that an informationally complete measurement preserves the topology of $\prem$ when we regard the measurement scheme as a mapping from $\prem$ into the real vector space $\R^\mathfrak{m}$ corresponding to measurement outcomes or probabilities. More precisely, on its image in $\R^\mathfrak{m}$ this map has to be a homeomorphism or, with some additional assumptions, a diffeomorphism. For various manifolds $\prem$ the existence of such maps, i.e., the possibility of a (differential) topological embedding into $\R^\mathfrak{m}$ is well studied and allows us to translate non-embedding results into lower bounds on $\mathfrak{m}$. Fortunately, in some cases of interest these bounds are very close to or even match the upper bounds obtained in the previous section.

\subsection{Measurements as homeomorphism}\label{sec:topology}

As before we will identify the set of selfadjoint operators on $\C^d$ with $\R^{d^2}$. In this way, we can regard the set $\sh$ of all states as well as any of its closed subsets $\prem\subseteq\sh$ as a compact subset of $\R^{d^2}$ from which it inherits a natural topology.

Let $\Ao$ be a POVM with $m+1$ outcomes.
This induces a map
\begin{equation}\label{eq:h}
h_\Ao:\sh\to\R^{m} \, , \quad h_\Ao(\varrho) := (\tr{\varrho \Ao_1},\ldots, \tr{\varrho \Ao_{m}}) \, ,
\end{equation}
which is injective as a map from $\prem$ iff $\Ao$ is informationally complete w.r.t. $\prem$. With a slight abuse of notation we write $h_\Ao$ for the map from $\prem$ as well as for the extended map from $\R^{d^2}$. 

\begin{proposition}[Informational completeness and topological embeddings]\label{prop:emb}
Let $\prem\subseteq\sh$ be a closed subset. A POVM $\Ao$ is informationally complete w.r.t. $\prem$ iff the map $h_\Ao$ is a topological embedding of $\prem $ into $\R^m$. 
\end{proposition}
\begin{proof}
By definition a topological embedding is an injective continuous map which has a continuous inverse on its image, i.e., a homeomorphism onto its image. So injectivity of $h_\Ao$ and thus informational completeness of $\Ao$ is implied by $h_\Ao$ being a topological embedding. For the converse note that $h_\Ao$ is linear on $\R^{d^2}$ and thus continuous. Moreover, by assumption $h_\Ao:\prem\rightarrow h_\Ao(\prem)$ is a continuous bijection and as such has a continuous inverse since $\prem$ is  compact.
\end{proof}

The usefulness of Proposition \ref{prop:emb} is that it gives a method to derive a lower bound for $\Min{\prem}$.
If we know that $\prem$ does not admit a topological embedding into $\R^m$, then $\Min{\prem} > m$.

\begin{figure}
\includegraphics[width=0.7\textwidth]{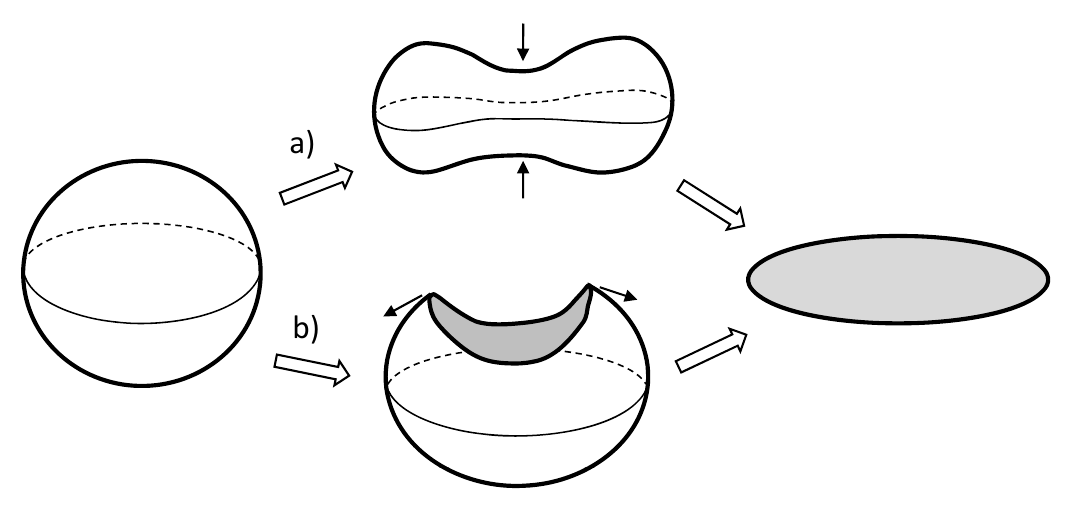}
\caption{\label{fig:Bloch} A mapping from the sphere $\S^2\subset\R^3$ to the plane $\R^2$ is either (a) not injective, or (b) not continuous. 
When regarding the sphere as the manifold of pure qubit states and the mapping as a measurement with $3$ outcomes (of which only two can have independent probabilities), this simple picture implies that a pure state informationally complete measurement requires $4$ outcomes.}
\end{figure}

\begin{example}[Pure qubit states revisited]
As a simple demonstration of Proposition \ref{prop:emb}, let us consider the subset of all pure qubit states.
This subset is homeomorphic to the unit sphere $\S^2\subset\R^3$.
Hence, any POVM with $3$ outcomes defines a continuous map from $\S^2$ into $\R^2$. 
A measurement separating all pairs of distinct pure states has to map different initial states (=points on the sphere) to different points in $\R^2$. 
A discontinuous mapping (option (b) in Fig.\ref{fig:Bloch}) cannot arise from a measurement. Every continuous map into the plane, however, necessarily identifies different points of the sphere (option (a) in Fig.\ref{fig:Bloch}). 
In fact, by the \emph{Borsuk-Ulam theorem} every continuous map from $\S^{n-1}$ into $\R^n$ maps some pair of antipodal points to the same point.
Consequently, we recover the fact that any pure state informationally complete POVM on a qubit must have more than $3$ outcomes. 
\end{example}

\begin{example}[Measurements on several copies]
Although we do not discuss this scenario in other parts of this work, let us mention that Proposition \ref{prop:emb} holds as well if we allow global measurements on multiple copies, i.e., if we replace $\varrho$ by $\varrho^{\otimes n}$ for some $n\in\mathbb{N}$. The map $h_\Ao$ then becomes non-linear but it is evidently still continuous, which is all we need. In this way, also the aforementioned qubit results generalizes to the multiple copy case. 
\end{example}

By a \emph{surface} we mean a 2-dimensional topological manifold, i.e., a topological space with the property that every point has a neighborhood which is homeomorphic to an open subset of $\R^2$.
A surface is orientable or non-orientable.
(We recall that a possible characterization of this distinction is that a surface is non-orientable if and only if it contains a homeomorphic image of the M\"obius strip.)

\begin{corollary}[2-manifolds]\label{cor:surface}
If $\prem\subset\sh$ is a closed surface without boundary, then $\Min{\prem}\geq 3$. Moreover,  if $\prem$ is 
 non-orientable in addition, then $\Min{\prem}\geq 4$. 
\end{corollary}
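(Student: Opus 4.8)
The plan is to reduce both bounds to non-embedding statements via Proposition \ref{prop:emb} and the remark following it: if $\prem$ admits no topological embedding into $\R^m$, then $\Min{\prem}>m$. Hence it suffices to show that a closed surface without boundary admits no embedding into $\R^2$, and that a closed \emph{non-orientable} surface admits no embedding into $\R^3$; these give $\Min{\prem}\geq 3$ and $\Min{\prem}\geq 4$ respectively.

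For the first bound I would argue by invariance of domain. Suppose for contradiction that $h:\prem\to\R^2$ is a topological embedding. A closed surface has finitely many connected components, each itself a compact surface without boundary, so after restricting to one component we may assume $\prem$ is connected. Since $\prem$ is a $2$-manifold without boundary, Brouwer's invariance of domain (applied in local charts) shows that $h$ is an open map, so $h(\prem)$ is open in $\R^2$. On the other hand $\prem$ is compact, so $h(\prem)$ is compact, hence closed and bounded. A nonempty subset of the connected space $\R^2$ that is both open and closed must be all of $\R^2$, which is unbounded -- a contradiction. Thus $\prem$ does not embed in $\R^2$.

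For the second bound the essential input is the classical fact that a closed non-orientable surface cannot be embedded in $\R^3$. I would establish this via Alexander duality. Compactifying $\R^3$ to $\S^3=\R^3\cup\{\infty\}$, any embedding $\prem\hookrightarrow\R^3$ yields an embedding $\prem\hookrightarrow\S^3$, and Alexander duality gives $\tilde H_0(\S^3\setminus\prem;\integer)\cong\tilde H^2(\prem;\integer)=H^2(\prem;\integer)$. The left-hand group is free abelian (reduced zeroth homology always is), whereas the universal coefficient theorem applied to a closed non-orientable surface, for which $H_2(\prem;\integer)=0$ and $H_1(\prem;\integer)$ carries a $\integer_2$ torsion summand, gives $H^2(\prem;\integer)\cong\integer_2$. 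A free abelian group cannot be isomorphic to $\integer_2$, which is the desired contradiction. Hence $\prem$ does not embed in $\R^3$.

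The step I expect to be the main obstacle is the second one: the first bound is essentially an exercise in invariance of domain, but non-embeddability of non-orientable surfaces in $\R^3$ is a genuinely global phenomenon. Care is needed to check that Alexander duality applies to an embedded surface (it does, since a closed surface is a compact ENR in $\S^3$) and that the torsion computation of $H^2$ holds uniformly over all non-orientable closed surfaces. As an alternative one could invoke the Jordan--Brouwer separation theorem: a closed surface embedded in $\S^3$ separates it and bounds a compact region, and the boundary of an orientable compact $3$-manifold is orientable, which again contradicts non-orientability.
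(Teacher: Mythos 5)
Your proposal is correct and structurally identical to the paper's proof: both reduce the two bounds, via Proposition \ref{prop:emb}, to the non-embeddability of a compact boundaryless surface in $\R^2$ and of a closed non-orientable surface in $\R^3$. The only difference is depth: the paper merely \emph{cites} these two classical facts (the first via the remark that any compact subset of $\R^2$ has nonempty topological boundary, the second via the classification of surfaces), whereas you actually prove them --- the first by invariance of domain plus the clopen/compactness argument, the second by Alexander duality in $\S^3=\R^3\cup\{\infty\}$ combined with the universal-coefficient computation $H^2(\prem;\integer)\cong\integer_2$, and you correctly note the local-contractibility/ENR hypothesis needed for duality; both arguments are sound. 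Two small points. First, in the non-orientable case you should, as you did in the first part, pass to a non-orientable connected component before asserting $H^2\cong\integer_2$ (for a disconnected surface $H^2$ merely contains $\integer_2$-torsion, which still contradicts freeness, but the restriction makes the statement you wrote literally true). Second, your alternative via Jordan--Brouwer separation is not valid in the purely topological category as stated: for a wild embedding (of Alexander horned type) the closure of a complementary component need not be a compact manifold with boundary, so that route requires local flatness or smoothness, and the duality argument should remain the primary one.
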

\begin{proof} This follows from Prop.\ref{prop:emb} by noting that (i) a compact surface without boundary cannot be embedded into $\R^2$ (where any compact set does have a boundary) and (ii) by the classification of surfaces, non-orientability does not permit an embedding into $\R^3$.
\end{proof}
We note that this type of purely topological reasoning cannot give better lower bounds since by  Whitney's embedding theorem every surface can be embedded in $\R^4$ \cite{Whitney44}.

\begin{figure}
\includegraphics[width=0.5\textwidth]{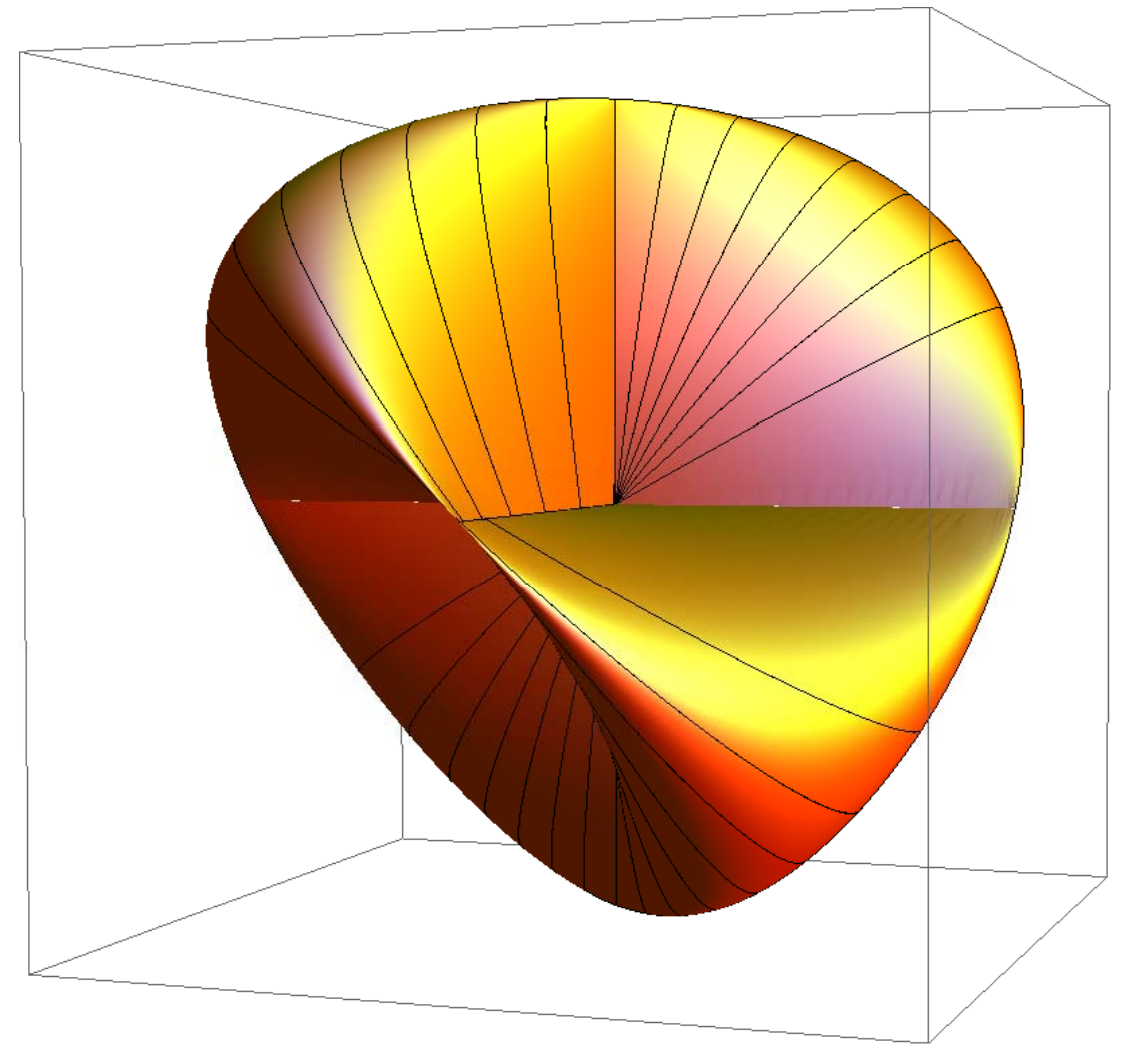}
\caption{\label{fig:Boy} The \emph{Roman surface} is a self-intersecting surface in $\mathbb{R}^3$ obtained by a mapping from the real projective plane $\mathbb{R}{\bf P}^2$. In our context it arises when the manifold of three-dimensional real, pure states is mapped onto the expectation values of three observables. Since the manifold is not orientable, every three-outcome measurement has to be non-injective---here reflected by the self-intersections of the surface. }
\end{figure}

As an application of Corollary \ref{cor:surface}, let us consider the set of pure states in $\C^3$ with real amplitudes. That is, we fix an orthonormal basis $\{\varphi_j\}_{j=1}^3$ and the pure states under investigation correspond to the vectors $\psi = \sum_{j=1}^3 r_j \varphi_j$, $r_j \in \R$.
\begin{corollary}[Pure states with real amplitudes in $\C^3$]\label{cor:real3}
Let $\prem\subset\mathcal{S}(\C^3)$ be the set of pure states with real amplitudes w.r.t. a fixed basis in $\C^3$. Then $\Min{\prem}=4$. 
\end{corollary}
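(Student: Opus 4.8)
The first step is to identify $\prem$ topologically. A pure state with real amplitudes corresponds to a real unit vector $r=(r_1,r_2,r_3)\in\S^2\subset\R^3$, and two such vectors give the same density matrix $\varrho=|\psi\rangle\langle\psi|$ (whose entries are the real numbers $\varrho_{kl}=r_kr_l$) precisely when they differ by an overall sign. Hence $\prem$ is homeomorphic to the real projective plane $\RP^2$, a closed non-orientable surface without boundary. The lower bound is then immediate: by Corollary \ref{cor:surface}, a closed non-orientable surface without boundary forces $\Min{\prem}\geq 4$.

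For the matching upper bound I would exhibit four selfadjoint operators whose expectation values embed $\RP^2$ into $\R^4$. Since for a real pure state $\tr{\varrho M}=r^TMr$ depends only on the real symmetric part of $M$, every expectation value is a quadratic form in $r$, and the construction amounts to choosing a good linear projection of the Veronese embedding $\RP^2\to\R^5$ down to $\R^4$. Concretely, I would take the four traceless real symmetric matrices $M_1,\dots,M_4$ realizing the quadratic forms $r_1^2-r_2^2$, $r_1r_2$, $r_1r_3$, $r_2r_3$, and check that the induced map
$$
\varrho\longmapsto\big(r_1^2-r_2^2,\; r_1r_2,\; r_1r_3,\; r_2r_3\big)
$$
is injective on $\RP^2$. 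The first two components recover $r_1^2+r_2^2=\sqrt{(r_1^2-r_2^2)^2+4(r_1r_2)^2}$, hence $r_1^2$, $r_2^2$ and $r_3^2=1-r_1^2-r_2^2$, while the three off-diagonal products fix all relative signs among the coordinates, leaving only the global sign that is quotiented out in $\RP^2$. By Proposition \ref{prop:emb}, this injectivity is equivalent to informational completeness w.r.t. $\prem$, so $\Min{\prem}\leq 4$, and together with the lower bound this yields $\Min{\prem}=4$.

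The main obstacle lies in the upper bound. Although it is classical that $\RP^2$ embeds into $\R^4$, a \emph{generic} linear projection of the Veronese surface from $\R^5$ to $\R^4$ is only an immersion with transverse double points, so one cannot invoke a general-position argument and must instead verify injectivity for the specific set of quadratic forms above. In particular the verification requires handling the degenerate loci where one or two of the coordinates $r_1,r_2,r_3$ vanish, since there the separating roles of the off-diagonal products degenerate and collisions must be excluded by hand.
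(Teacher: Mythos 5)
Your proposal is correct and takes essentially the same route as the paper: the same identification of $\prem$ with $\RP^2$ giving the lower bound via Corollary \ref{cor:surface}, and the same four quadratic forms $r_1r_2,\,r_2r_3,\,r_3r_1,\,r_1^2-r_2^2$ (merely reordered) for the embedding into $\R^4$. Your explicit injectivity check --- recovering $r_1^2+r_2^2=\sqrt{(r_1^2-r_2^2)^2+4(r_1r_2)^2}$ and then the relative signs from the off-diagonal products, with the degenerate loci handled separately --- correctly supplies a verification the paper only asserts.
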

\begin{proof}
Due to normalization every state in $\prem$ can be represented by a unit vector $x\in \S^2$. 
Since $x$ and $-x$, however, represent the same state, we have to identify antipodes so that $\prem$ is homeomorphic with $\RP^2$, the \emph{real projective plane}. 
Since $\RP^2$ is non-orientable, it follows from Cor. \ref{cor:surface} that $\Min{\RP^2}\geq 4$.
In order to see that in fact $\Min{\RP^2}=4$ consider the map $x\mapsto(x_1x_2,x_2x_3,x_3x_1,x_1^2-x_2^2)$. 
This is a topological embedding of $\RP^2$ in $\R^4$ which can be realized by a measurement scheme; the four components are expectation values of the selfadjoint matrices (written in the fixed orthonormal basis $\{\varphi_j\}_{j=1}^3$)
\begin{align}\label{eq:sigmas}
\half \begin{pmatrix}
0 & 1 & 0  \\
1 & 0 & 0  \\
0 & 0 & 0
\end{pmatrix} \, , \quad 
\half \begin{pmatrix}
0 & 0 & 0  \\
0 & 0 & 1  \\
0 &1 & 0
\end{pmatrix} \, , \quad 
\half \begin{pmatrix}
0 & 0 & 1  \\
0 & 0 & 0  \\
1 & 0 & 0
\end{pmatrix} \, , \quad 
\begin{pmatrix}
1 & 0 & 0  \\
0 & -1 & 0  \\
0 & 0 & 0
\end{pmatrix} \, , \quad 
\end{align}
respectively.
\end{proof}

 \begin{example}[Roman surface]
 The first three matrices in \eqref{eq:sigmas} give rise to measurement results which form the \emph{Roman surface} displayed in Fig.\ref{fig:Boy}. 
The failure of informational completeness (due to disregarding the necessary fourth measurement) is reflected by self-intersections of the surface.
 \end{example}

\subsection{Measurements as diffeomorphisms}\label{sec:geometry}

Manifolds of interest in quantum tomography often have a differentiable structure -- they are \emph{smooth manifolds}. 
In such a case we may resort to differential topology, which imposes more restrictive conditions on the existence of \emph{smooth embeddings}. Before we apply these conditions to the concrete cases of pure states and states with general rank constraints we provide some general background.

Suppose that $\prem$ is a smooth manifold. 
A smooth embedding $f$ of $\prem$ into $\R^m$ is a smooth map which is a topological embedding (i.e. homeomorphism onto its image) and has the property that the derivative of $f$ is everywhere injective.
Although for most smooth manifolds the minimal embedding dimension $m$ is not known exactly, quite narrow intervals have been determined for many cases of interest (see \cite{James71,Adachi93} for an overview). One tool to derive lower bounds on the minimal embedding dimension $m$ is Chern's result \cite{Chern48} that a smooth embedding into $\R^m$ requires that the dual Stiefel-Whitney classes vanish, i.e., $\bar{W}(\prem)_i=0$ for all $i\geq m-D(\prem)$. 
Other bounds can be obtained from an index theorem due to Atiyah and Hirzebruch \cite{AtHi59} and similar ideas \cite{Mayer65,Sugawara79}.
 On the positive side a general upper bound is due to Whitney \cite{Whitney44} who showed that a smooth embedding $\prem\ra\R^m$ always exists if $m\geq 2 D(\prem)$ (actually $m\geq 2 D(\prem)-1$ unless $m$ is a power of 2). 
 Whitney's bound is known to be optimal, i.e., in the worst case the dimension of the Euclidean space has to be twice the dimension of the manifold.

Again, for a POVM $\Ao$ with $m+1$ outcomes, we denote by $h_\Ao:\sh\to\R^{m}$ the induced mapping defined in Eq.(\ref{eq:h}).
In order to apply the known results on the lower bounds on dimensions for smooth embeddings, we have to show that for a $\prem$-informationally complete POVM $\Ao$, the related map $h_\Ao:\prem\rightarrow h_\Ao(\prem)$ is a diffeomorphism.
In Subsection \ref{sec:topology} we saw already that it is a homeomorphism. 

Throughout we will suppose that $\prem$ is a \emph{compact embedded submanifold} of $\R^{d^2}$ where we identify the latter with the space of selfadjoint matrices in $\C^{d\times d}$. With a slight abuse of notation we will  write $\prem$ for both, the manifold and its inclusion in $\R^{d^2}$.   
We denote by $T_p(\prem)$ the tangent space of $\prem$ at $p\in\prem$ and by $h_*:T_p(\prem)\ra T_{h_\Ao(p)}\big(h_\Ao(\prem)\big)$ the derivative, which is a linear map between the tangent spaces (sometimes call pushforward). 
The cone defined by
$$
\Delta(\prem):=\{X\in\R^{d^2} | X=\lambda(M_1-M_2) \ \textrm{for some}\ M_i\in\prem, \lambda>0\} \, .
$$
 will play an important role in the following.

\begin{theorem}[Smooth embeddings]\label{thm:infoembeddingsmooth} 
Let $\prem$ be a \emph{compact embedded submanifold} of $\R^{d^2}$, where the latter is identified with the space of selfadjoint matrices in $\C^{d\times d}$.  Suppose that $h_\Ao:\prem \ra\R^m$ is a mapping of the form in Eq.\eqref{eq:h}.  Then $h_\Ao(\prem)$ is a smooth embedding of $\prem$ in $\R^m$ if $h_\Ao$ is injective on $\prem$ and for all $p\in\prem$ the following inclusion holds: \be T_p(\prem)\subseteq\Delta(\prem).\label{eq:diffintan}\ee
 \end{theorem}
\begin{proof}
To show that $h_\Ao$ is a smooth embedding we need to prove (i) that it is smooth, which follows from linearity, (ii) that it is a topological embedding, which follows from the assumed injectivity and Prop. \ref{prop:emb} and (iii) that it has an injective derivative everywhere. 

Due to the linearity of $h_\Ao$ on $\R^{d^2}$ we have $h_*=h_\Ao$ but we have to be careful with the domains in order to argue that the injectivity of $h_\Ao$ (as a mapping from $\prem$) implies the injectivity of $h_*$ (as a set of mappings from $T_p(\prem)$ for any $p\in\prem$). By assumption, for any $p\in\prem$ and $X\in T_p(\prem)$ we have $X\in\Delta(\prem)$. Then indeed $h_*(X)=0$ together with the injectivity of $h_\Ao$ implies $X=0$ since $h_*(X)=\lambda\big(h(M_1)-h(M_2)\big)$ is zero only if $M_1=M_2$.
\end{proof}

We will now prove the inclusion \eqref{eq:diffintan} for the subset of states in $\mathcal{S}(\C^d)$ which are proportional to a projection of rank $r\leq d$, i.e., states which are maximally mixed within a subspace of dimension $r$. 
This set forms a smooth manifold of real dimension $2r(d-r)$ which is isomorphic to the \emph{complex Grassmannian} manifold $G(r,d-r)$ \cite{Dimitric96}. 
$G(1,d-1)$ is the set of pure states.

\begin{lemma}\label{lem:Grass} 
The inclusion $T_p(\prem)\subseteq\Delta(\prem)$ holds for all $p\in\prem$ if $\prem$ is the complex Grassmannian manifold $G(r,d-r)$ understood as the submanifold in the space of $d\times d$ selfadjoint matrices which consists of all orthogonal projections of rank $r$.
 \end{lemma}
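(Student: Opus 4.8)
The plan is to show that every tangent vector at a rank-$r$ projection $P$ lies in $\Delta(\prem)$ by writing it \emph{exactly} as a positive multiple of a difference of two rank-$r$ projections. First I would identify the tangent space. Writing $Q=\1-P$, differentiate the idempotency relation $P(t)^2=P(t)$ along a smooth curve of rank-$r$ projections through $P$: with $X=\dot P(0)$ this gives $XP+PX=X$. Multiplying on both sides by $P$ yields $PXP=\nul$, and multiplying by $Q$ (using $PQ=QP=\nul$) yields $QXQ=\nul$. Hence every $X\in T_P(\prem)$ is selfadjoint and purely off-diagonal with respect to the splitting $\hi=\mathrm{ran}(P)\oplus\mathrm{ran}(Q)$, i.e. $X=(P+Q)X(P+Q)=PXQ+QXP$. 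Since the lemma only requires the inclusion $T_P(\prem)\subseteq\Delta(\prem)$, it suffices to prove that \emph{every} such off-diagonal selfadjoint $X$ belongs to $\Delta(\prem)$.

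Next I would diagonalize the off-diagonal block. Let $B=PXQ$, regarded as an operator $\mathrm{ran}(Q)\to\mathrm{ran}(P)$, and take its singular value decomposition, producing orthonormal systems $\{u_i\}_{i=1}^k\subset\mathrm{ran}(P)$ and $\{v_i\}_{i=1}^k\subset\mathrm{ran}(Q)$ together with singular values $\sigma_i>0$, where $k=\rank{B}\le\min(r,d-r)$, so that
\be
X=\sum_{i=1}^k \sigma_i\big(\kb{u_i}{v_i}+\kb{v_i}{u_i}\big).
\ee
This decomposes $X$ into mutually orthogonal $2\times2$ blocks, the $i$-th one supported on $\mathrm{span}(u_i,v_i)$ and equal there, in the ordered basis $(u_i,v_i)$, to $\sigma_i\left(\begin{smallmatrix}0&1\\1&0\end{smallmatrix}\right)$.

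Then I would build the two projections block by block. Set $\lambda=\no{X}=\max_i\sigma_i$ (if $X=\nul$ simply take $P_1=P_2=P$, so $\nul\in\Delta(\prem)$ trivially), and for each $i$ put $t_i=\sigma_i/\lambda\in(0,1]$ and $s_i=\sqrt{1-t_i^2}$. On $\mathrm{span}(u_i,v_i)$, in the basis $(u_i,v_i)$, define
\be
p_i^{+}=\tfrac12\begin{pmatrix} 1+s_i & t_i \\ t_i & 1-s_i\end{pmatrix},\qquad p_i^{-}=\tfrac12\begin{pmatrix} 1+s_i & -t_i \\ -t_i & 1-s_i\end{pmatrix}.
\ee
A direct check gives $(p_i^{\pm})^2=p_i^{\pm}$ and $\tr{p_i^{\pm}}=1$, so these are rank-one projections, with $p_i^{+}-p_i^{-}=t_i\left(\begin{smallmatrix}0&1\\1&0\end{smallmatrix}\right)=\tfrac{\sigma_i}{\lambda}\left(\begin{smallmatrix}0&1\\1&0\end{smallmatrix}\right)$. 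Completing $\{u_1,\dots,u_k\}$ to an orthonormal basis $u_1,\dots,u_k,e_1,\dots,e_{r-k}$ of $\mathrm{ran}(P)$, I set
\be
P_1=\sum_{i=1}^k p_i^{+}+\sum_{j=1}^{r-k}\kb{e_j}{e_j},\qquad P_2=\sum_{i=1}^k p_i^{-}+\sum_{j=1}^{r-k}\kb{e_j}{e_j}.
\ee
Because the subspaces $\mathrm{span}(u_i,v_i)$ are mutually orthogonal and orthogonal to every $e_j$, both $P_1$ and $P_2$ are sums of $k+(r-k)=r$ mutually orthogonal rank-one projections, hence rank-$r$ projections, so $P_1,P_2\in\prem$. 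The shared summand $\sum_j\kb{e_j}{e_j}$ cancels in the difference, leaving $P_1-P_2=\sum_{i=1}^k(p_i^{+}-p_i^{-})=\tfrac{1}{\lambda}X$, i.e. $X=\lambda(P_1-P_2)\in\Delta(\prem)$ with $\lambda>0$.

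The only place requiring care — and the conceptual heart of the argument — is the rank bookkeeping that forces $P_1$ and $P_2$ to have rank exactly $r$. Since the nonzero eigenvalues of an off-diagonal $X$ come in $\pm$ pairs, the $2\times2$ blocks contribute one dimension to each of $P_1$ and $P_2$, and the inequalities $k\le r$ and $k\le d-r$ guarantee exactly enough orthogonal room (the $r-k$ shared directions inside $\mathrm{ran}(P)$, and the $d-r-k$ unused directions inside $\mathrm{ran}(Q)$) to complete both projections to rank $r$ while preserving the identity $P_1-P_2=X/\lambda$. I expect no topological obstruction: the normalization $\lambda=\no{X}$ is precisely what makes each Bloch vector $(\pm t_i,0,s_i)$ a genuine unit vector, so that $t_i\le 1$ is both necessary and sufficient, and the lemma follows from this explicit construction.
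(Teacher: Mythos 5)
Your proof is correct and follows essentially the same route as the paper: identify the tangent vectors at $P$ as off-diagonal selfadjoint matrices, reduce by singular value decomposition to mutually orthogonal $2\times 2$ blocks proportional to $\sigma_x$, normalize by $\lambda=\no{X}$, and realize each block as a difference of two rank-one projections (qubit states with unit Bloch vectors). Your writeup is in fact slightly more careful on two minor points: you get the off-diagonal form directly from differentiating $P(t)^2=P(t)$ rather than via unitary orbits plus a dimension count (only the inclusion is needed), and your explicit completion by the $r-k$ shared projections $\kb{e_j}{e_j}$ spells out the rank-$r$ bookkeeping that the paper leaves implicit (its block form $\bigoplus_{i=1}^r c_i\sigma_x\oplus 0_{d-2r}$ tacitly assumes $2r\le d$, whereas your construction with $k\le\min(r,d-r)$ works for all $r$).
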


\begin{proof}
Let us first identify the tangent space at an arbitrary point $P\in\prem$ which is now a selfadjoint projection with  $\tr{P}=r$. 
Considering a curve within $\prem$ through $P$ given by the unitary orbit $c(t):=e^{iHt}Pe^{-iHt}$ for some selfadjoint matrix $H$ and $t\in\R$. 
The derivative 
\begin{equation*}
\partial_t c(t)\big|_{t=0}= i[H,P]
\end{equation*}
 is an element of $T_p(\prem)$ and in fact, such derivatives span the entire tangent space \be \label{eq:tanspaceGrass} T_p(\prem) =\big\{X=X^\dagger|X=i[H,P]\mbox{ for some }H=H^\dagger\big\}.\ee In order to see this we have to show that they span a vector space which has the same dimension as the manifold (for which $D(\prem)=2r(d-r)$). To this end, note that there is a one-to-one relation between commutators and block off-diagonal matrices in the sense that we can always write
 \be i[H,P]=\left(
              \begin{array}{cc}
                0 & C \\
                C^\dagger & 0 \\
              \end{array}
            \right),\quad C\in\C^{r\times (d-r)},\label{eq:tanGrassoff}\ee
 in the basis where $P=\1\oplus 0$. So the dimensions match, which verifies Eq.(\ref{eq:tanspaceGrass}).

In a suitable basis any element $X\in T_P(\prem)$ is such that \be X=\left[\bigoplus_{i=1}^r\left(
                                                                                 \begin{array}{cc}
                                                                                    0& c_i \\
                                                                                    c_i& 0 \\
                                                                                 \end{array}
                                                                               \right)\right]\oplus 0_{d-2r},\quad c_i\geq 0,\ee
since Eq.(\ref{eq:tanGrassoff}) allows us to work with the singular values $\{c_i\}$ of $C$ by transforming $X\mapsto (U\oplus V)X(U\oplus V)^\dagger$ with appropriate unitaries $U$ and $V$. Setting $\lambda:=\max_i{c_i}$ equal to the operator norm of $X$ we complete the proof if we show that every $2\times 2$ matrix of the form $c\sigma_x$ with $c\in[0,1]$ is a difference of two projections. This can seen to be true by taking the difference of two pure qubit states whose Bloch vectors are parameterized by $(c,\pm \sqrt{1-c^2},0)$.
\end{proof}

A special case is the $2d-2$ dimensional manifold of pure states in $\C^{d}$ which can be identified with the \emph{complex projective space} $\CP^{d-1}$.  
The map from $\CP^{d-1}$ to selfadjoint rank-one projections is itself a smooth embedding. 
This together with Theorem \ref{thm:infoembeddingsmooth} and the above Lemma implies that non-embedding results for $\CP^{d-1}$ provide lower bounds on $\mathfrak{m}$.
The probably best non-embedding result in this case is by Mayer \cite{Mayer65} from which we now obtain the following (\cite{Mayer65}, Sec. 4.6):

\begin{theorem}[Pure state informational completeness -- lower bound] 
Informational completeness for the set of pure states in $\C^d$ requires
\be \mathfrak{m} > \left\{\begin{array}{ll}
                 2D-2\alpha & \forall d>1, \\
                 2D-2\alpha+2 &  d \mbox { odd, and }\alpha=3{\rm mod}4 \\ 
                 2D-2\alpha+1 & d \mbox { odd, and }\alpha=2{\rm mod}4,
               \end{array}\right.\label{eq:CPMayerbound66}
\ee
where $\alpha$ denotes the number of $1$'s in the binary expansion of $d-1$ and $D=2d-2$ is the real dimension of the manifold.
\end{theorem} 

If we now combine this non-embedding result with the upper bounds discussed in Subsection \ref{sec:pic}, we obtain a fairly comprehensive picture on the minimal number of measurements required to identify an element of the set of pure states. 
The bounds for the minimal number for the dimensions $2-10$ are summarized in the table below. 

\begin{figure}
\begin{center}
\begin{tabular}{ | c | c |  c |}
\hline
$d$ &  $4d-5$ & $\Min{\prem_1}$ \\
\hline
\hline
2  & 3 & 3 \\ 
3 & 7 & 7 \\ 
4 & 11 & 9 \textrm{or} 10\\ 
5 & 15 & 15 \\ 
6 & 19 & 17 \textrm{or} 18\\ 
7 & 23 & 22 \textrm{or} 23\\
8 & 27 & 23 -- 25\\ 
9 & 31 & 31 \\
10 & 35 & 33 \textrm{or} 34 \\
\hline
\end{tabular}
\caption{The bounds for the minimal number of operators for the dimensions $2-10$.}
\end{center}
\end{figure}

For all dimensions $d$ we can say that the difference between the upper and lower bounds is at most $\log_2(d)$, and that the minimal number differs from the best affine upper bound $4d-5$ by at most $2\log_2(d)$.
The lower and upper bounds up to $d=30$ are presented in Fig.\ref{fig:upperlowerbound}.

\begin{example}[Non-embedding results for Grassmannian manifolds]
For the $2r(d-r)$ dimensional complex Grasmannian manifold $G(r,d-r)$, i.e., states which are maximally mixed within a subspace of dimension $r$,
again Theorem \ref{thm:infoembeddingsmooth} and the subsequent Lemma assert that non-embedding results carry over to lower bounds on $\mathfrak{m}$. From \cite{Sugawara79} we obtain a bound for embeddings $G(r,d-r)\ra\R^\mathfrak{m}$ in the form of Eq.\eqref{eq:CPMayerbound66} where $D$ has to be set to $D=2r(d-r)$ and $\alpha$ has to be replaced by $\sum_{j=1}^r \beta(d-j)-\beta(j-1)$ where $\beta(n)$ is the number of ones in the binary expansion of $n$.  For $r=1$ this coincides with the aforementioned bound for pure states. Note that this provides a lower bound for informationally complete measurement schemes w.r.t. all sets which includes such a  ``Grassmannian manifold'', like the set of density matrices with rank bounded by $r$.
\end{example}

\begin{figure}
\includegraphics[width=0.7\textwidth]{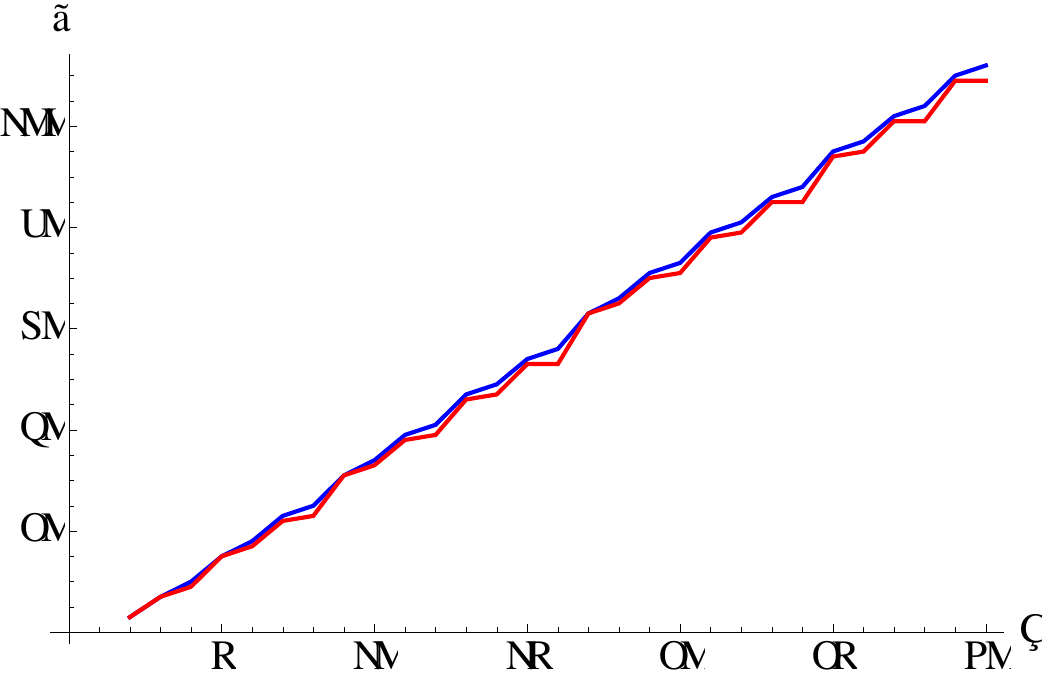}
\caption{\label{fig:upperlowerbound} 
The derived upper (blue) and lower (red) bounds for the minimal number $\mathfrak{m}$ for which informational completeness for pure states in $\C^d$ can be achieved. Note that the bounds differ by at most $3$ until $d=30$. The slope of the best affine upper bound is exactly 4.}
\end{figure}

\section{Summary}

How many measurement outcomes (i.e., POVM elements) are minimally needed in order to identify all quantum states from a given set? We have shown on the one hand that if the set is a manifold, then topological obstructions can increase the number of required measurements by a factor of two over the dimensionality of the manifold. On the other hand we have seen that this factor of two is sufficient even in a more general context where the considered set is not necessarily a manifold and its dimensionality is understood as its Minkowski dimension.

We have considered two different types of examples: 2-manifolds, where (non-)orientability plays an important role, and Grassmannian manifolds, which contain the set of all pure states as a particular instance.

For the latter case we have shown that upper and lower bounds---both obtained via topological embeddings---essentially match. In fact, they never differ by more than $\log_2(d)$. To be more precise, their difference is strictly less than the number of ones appearing in the binary expansion of $d-1$. 

Points which are beyond the present work, albeit obviously of practical relevance, are the inversion algorithm and issues of robustness and certifiability. 
For instance, we do not know whether in the pure state case $m\sim 4d$ can be achieved in a way such that (i) the inversion is algorithmically efficient and (ii) the validity of the assumption behind the prior information is certifiable.

\section*{Acknowledgements} 

We thank Claudio Carmeli for pointing out an error in the earlier version of Theorem \ref{prop:milgram}.

MMW was supported by the EU-STREP projects COQUIT and QUEVADIS and the Alfried Krupp von Bohlen und Halbach-Stiftung.
TH was supported by the Emil Aaltonen Foundation,  Alfred Kordelin Foundation and Academy of Finland (grant no. 1381359).
TH and MMW are both grateful for the hospitality and the inspiring
working environment at the Institut Mittag-Leffler in Stockholm, where this work was partly done.


\end{document}